\definecolor{ColorRoundE}{rgb}{0.85,0.03,0.19}
\definecolor{ColorRoundD}{rgb}{0.376,0.619,0.843}
\definecolor{ColorRoundF}{rgb}{0.54,0.17,0.87}
\newcommand{\ChengSIGMOD}{}
\newcommand{\roundA}{}
\newcommand{\roundB}{}
\newcommand{\revision}{}
\newcommand{\LC}{}
\def\CR{1}
\newtheorem{problem}{Problem}
\newtheorem{theorem}{Theorem}
\newtheorem{definition}{Definition}
\newtheorem{lemma}{Lemma}
\newtheorem{example}{Example}
\newtheorem{observation}{Observation}
  \providecommand\BibTeX{{%
    \normalfont B\kern-0.5em{\scshape i\kern-0.25em b}\kern-0.8em\TeX}}}
\begin{document}
\sloppy
\title{Maximum $k$-Biplex Search on Bipartite Graphs: A Symmetric-BK Branching Approach}

\author{Kaiqiang Yu}
\affiliation{%
  \institution{Nanyang Technological University}
  \country{Singapore}
}
\email{kaiqiang002@e.ntu.edu.sg}
\author{Cheng Long}
\authornotemark[1]
\thanks{*Cheng Long is the corresponding author.}
\affiliation{%
  \institution{Nanyang Technological University}
  \country{Singapore}
}
\email{c.long@ntu.edu.sg}


\begin{abstract}
Enumerating maximal $k$-biplexes (MBPs) of a bipartite graph has been used for applications such as fraud detection. Nevertheless, there usually exists an exponential number of MBPs, which brings up two issues when enumerating MBPs, namely the effectiveness issue (many MBPs are of low values) and the efficiency issue (enumerating all MBPs is not affordable on large graphs). Existing proposals of tackling this problem impose constraints on the number of vertices of each MBP to be enumerated, yet they are still not sufficient (e.g., they require to specify the constraints, which is often not user-friendly, and cannot control the number of MBPs to be enumerated directly). Therefore, in this paper, we {\LC study the problem of finding} $K$ MBPs with the most edges called MaxBPs, where $K$ is a positive integral user parameter. The new proposal well avoids the drawbacks of existing proposals (i.e., the number of MBPs to be enumerated is directly controlled and the MBPs to be enumerated tend to have high values since they have more edges than the majority of MBPs). We formally prove the NP-hardness of the problem. 
We then design two \emph{branch-and-bound} algorithms, among which, the better one called \texttt{FastBB} improves the worst-case time complexity to $O^*(\gamma_k^ n)$, where $O^*$ suppresses the polynomials, $\gamma_k$ is {a} real number that relies on $k$ and is \emph{strictly} smaller than 2, and $n$ is the number of vertices in the graph. 
For example, for $k=1$, $\gamma_k$ is equal to $1.754$. 
We further introduce three techniques for boosting the performance of the branch-and-bound algorithms, among which, the best one called \texttt{PBIE} can further improve the time complexity to $O^*(\gamma_k^{d^3})$ for large sparse graphs, where $d$ is the maximum degree of the graph (note that $d<< n$ for sparse graphs). 
We conduct extensive experiments on both real and synthetic datasets, and the results show that our algorithm is up to four orders of magnitude faster than all baselines and finding MaxBPs works better than finding all MBPs for a fraud detection application.
\end{abstract}
\begin{CCSXML}
<ccs2012>
<concept>
<concept_id>10002950.10003624.10003633.10010917</concept_id>
<concept_desc>Mathematics of computing~Graph algorithms</concept_desc>
<concept_significance>100</concept_significance>
</concept>
</ccs2012>
\end{CCSXML}

\ccsdesc[100]{Mathematics of computing~Graph algorithms}

\keywords{bipartite graph; maximum k-biplex; maximum subgraph search}

\maketitle

\section{introduction}
\label{sec:intro}
Bipartite graph is an important type of data structure where vertices are divided into two disjoint sets at two sides and each edge connects a vertex at one side and another at the other side. Bipartite graphs have been widely used for modeling the interactions between two types of entities in many real applications, with the entities being vertices and the interactions being edges. 
Some examples include commenting interactions between users and articles in social media~\cite{zhang2020overview}, purchasing interactions between customers and products in E-commerce~\cite{wang2006unifying}, visiting interactions between users and websites in web applications~\cite{beutel2013copycatch}, etc.

A bipartite subgraph is called a $k$-biplex if each vertex at one side \emph{disconnects} at most $k$ vertices at the other side, where $k$ is often a small positive integer~\cite{DBLP:journals/sadm/SimLGL09,yu2021efficient,yu2022kbiplex}. 
$k$-biplex has been used to capture \emph{dense/cohesive} subgraphs in a given bipartite graph for solving practical problems such as anomaly detection~\cite{gangireddy2020unsupervised,yu2021graph}, online recommendation~\cite{DBLP:conf/kdd/PoernomoG09a, DBLP:conf/icdm/GunnemannMRS11} and community search~\cite{DBLP:conf/ssdbm/HaoZW020,DBLP:journals/corr/abs-2011-08399}. For example, in e-commerce platforms, some agents are paid to promote the ranks of certain products by coordinating a group of fake users to post fake comments. The subgraphs induced by these fake users and the products they promote would be dense and likely to be $k$-biplexes~\cite{gangireddy2020unsupervised}. 

{\roundA
\smallskip
\noindent\textbf{Motivations.} 
There are a few studies on the problem of enumerating \emph{maximal $k$-biplexes} (MBPs)~\cite{DBLP:journals/sadm/SimLGL09,yu2021efficient,yu2022kbiplex}.
Nevertheless, there usually exist an exponential number of MBPs, which brings two issues of enumerating all MBPs. First, not all MBPs carry essential information (e.g., those MBPs with few vertices are often deemed not interesting~\cite{DBLP:journals/sadm/SimLGL09}). Second, the process of enumerating all MBPs is costly (e.g., according to existing studies~\cite{yu2021efficient}, 
enumerating all MBPs {\LC on large graphs} is not affordable). To mitigate these issues, existing studies~\cite{yu2022kbiplex} impose some constraints on the number of vertices at each of the two sides of a MBP to be enumerated, e.g., they enumerate only MBPs with \emph{at least} a certain number vertices at each side (which we call large MBPs). While this strategy makes it possible to control the number of MBPs to be enumerated, it achieves the goal only in an indirect way and introduces an additional issue of requiring to set proper thresholds of the number of vertices. In cases where users have no prior knowledge about the thresholds, they would find it not user-friendly. They can try different thresholds, but then the enumeration processes would run multiple times and bring up the time costs. 

Motivated by these issues, in this paper, we {\LC study} the problem of finding $K$ MBPs with the most edges among large MBPs, where $K$ is an integral parameter. We call each of these MBPs a \emph{maximum $k$-biplex} (MaxBP). Compared with existing studies on $k$-biplexes, the problem of finding MaxBPs enjoys three advantages. First, each MaxBP to be found has more edges than those that are not returned, and thus the found MBP is of more significance. In our experiments, we verify this via a case study, which shows that a method based on MaxBPs provides F1 score up to 0.99 for a fraud detection task. Second, it provides a \emph{direct} control on the number of MBPs to be found without the need of making multiple trials of enumeration. Third, compared with alternative proposals, e.g., finding the \emph{first} $K$ MBPs (as existing studies~\cite{yu2022kbiplex} do) or finding $K$ MBPs with the most vertices, our solution to the problem of finding MaxBPs would return MBPs that are more balanced, which are deemed to be superior over imbalanced structures~\cite{lyu2020maximum}. 
}

{\roundA
\smallskip
\noindent\textbf{Baseline Methods.} 
Given the fact that MaxBPs are the MBPs with the most edges, we can adapt the existing algorithms of enumerating large MBPs, namely \texttt{iMB}~\cite{yu2021efficient} and \texttt{iTraversal}~\cite{yu2022kbiplex}, to find the MaxBPs and yield the following two baseline methods. 
The first one called \texttt{iMBadp} adapts \texttt{iMB}~\cite{yu2021efficient} (a branch-and-bound algorithm) by incorporating additional pruning techniques which prune those branches that cannot hold any MBP with more edges than the $K$ MaxBPs found so far. 
The second baseline called \texttt{iTradp} simply runs  \texttt{iTraversal}~\cite{yu2022kbiplex} (a reverse search based algorithm) and returns $K$ MaxBPs since \texttt{iTraversal} cannot be equipped with additional pruning techniques easily.
%
Nevertheless, \texttt{iMBadp} has its efficiency highly rely on the pruning techniques and \texttt{iTradp} needs to explore all large MBPs, which is time-consuming. Both of them have the worst-case time complexity of $O^*(2^{n})$, where $n$ is the number of vertices in the given bipartite graph and $O^*$ suppresses polynomials. 
%
Furthermore, we can adapt those existing algorithms of enumerating large maximal $k$-plexes, which are counterparts of MBPs on general graphs. Here, a $k$-plex is a subgraph with each vertex disconnecting at most $k$ vertices in the subgraph. 
Therefore, the third baseline method called \texttt{FPadp} adapts \texttt{FaPlexen}~\cite{DBLP:conf/aaai/ZhouXGXJ20} (the algorithm for enumerating large maximal $k$-plexes with the number of vertices at least a threshold) with some additional pruning techniques that are tailored for MBPs. 
Still, \texttt{FPadp} is inferior to the new algorithms we will introduce in this paper both theoretically and empirically.
}

{\roundA
\smallskip
\noindent\textbf{New Methods.}
We first introduce a \emph{branch-and-bound} algorithm called \texttt{BasicBB}, which is based on a conventional and widely-used branching strategy that we call \emph{Bron-Kerbosch (BK) branching}~\cite{bron1973algorithm}. The BK branching recursively partitions the search space (i.e., the set of all possible MBPs) to multiple sub-spaces via \emph{branching}. \texttt{BasicBB} has the worst-case time complexity $O(n\cdot d\cdot 2^{n})$ (i.e., $O^*(2^{n})$),
where $d$ is the maximum degree of the graph.
This time complexity is the same as those of the baseline methods. 
We then introduce a new branching strategy called \emph{Symmetric-BK (Sym-BK) branching}, which is symmetric to the BK branching but better suits our problem of finding MaxBPs. We further present our method for determining an ordering of vertices, which is critical for Sym-BK branching. We finally introduce a new branch-and-bound algorithm called \texttt{FastBB}, which is based on the Sym-BK branching. \texttt{FastBB} has its worst-case time complexity $O(n\cdot d \cdot \gamma_k^{n})$ (i.e., $O^*(\gamma_k^{n})$), where $\gamma_k$ is \emph{strictly} smaller than 2 and depends on the setting of $k$. For example, when $k=1$, $\gamma_k$ is $1.754$. 
This is a remarkable theoretical improvement over the prior solutions given that many existing algorithms of enumerating subgraphs are based on BK branching and have the worst-case time complexity of $O^*(2^{n})$~\cite{DBLP:journals/sadm/SimLGL09,yu2021efficient,wang2017parallelizing}.

{\ChengSIGMOD{In addition, we adapt two existing techniques for boosting the efficiency and scalability of the branch-and-bound (\texttt{BB}) algorithms including \texttt{BasicBB} and \texttt{FastBB}.}} They share the idea of constructing \emph{multiple} problem instances of finding MaxBPs each on a smaller subgraph. Specifically, the first technique, called \emph{progressive bounding} (\texttt{PB}), is adapted from an existing study of finding the biclique with the most edges~\cite{lyu2020maximum}. The second technique, called \emph{inclusion-exclusion} (\texttt{IE}), is adapted from the \emph{decomposition} technique, which has been widely used for enumerating and finding subgraph structures~\cite{wang2022listing,chen2021efficient,DBLP:conf/kdd/ConteMSGMV18}. \texttt{PB} improves the practical performance of a \texttt{BB} algorithm only while \texttt{IE} improves both the theoretical time complexity (for certain sparse graphs) and the practical performance.
{\ChengSIGMOD We then present \texttt{PBIE}, which combines PB and IE naturally.} \texttt{PBIE} enjoys the benefits of both \texttt{PB} and \texttt{IE}. We note that all these techniques are orthogonal to the \texttt{BB} algorithms, i.e., any of these techniques, namely \texttt{PB}, \texttt{IE}, and \texttt{PBIE}, can be used to boost the efficiency and/or scalability of \texttt{BasicBB} and \texttt{FastBB}. 
To the best of our knowledge, this is first time that \texttt{PB} and \texttt{IE} are combined naturally.

}

\smallskip
\noindent\textbf{Contributions.} Our major contributions are summarized below.
\begin{itemize}[leftmargin=*]
    \item We study the problem of finding MaxBPs, and formally prove the NP-hardness of the problem.
    \item We propose an efficient branch-and-bound algorithm, called \texttt{FastBB}, which is based on a novel Sym-BK branching strategy. In particular, \texttt{FastBB} achieves the state-of-the-art worst-case time complexity $O(n\cdot d\cdot\gamma_k^{n})$ with $\gamma_k < 2$.
    \item We further introduce a combined framework, called \texttt{PBIE}, to further boost the performance of \texttt{FastBB}. \texttt{PBIE} combines two adapted frameworks, namely the progressive bounding framework \texttt{PB} and the inclusion-exclusion based framework \texttt{IE}. 
    When \texttt{PBIE} is used with \texttt{FastBB}, the worst-time time complexity becomes $O(d^4\cdot \gamma_k^{d^3})$. Note that this is better than that of \texttt{FastBB} on certain graphs (e.g., those sparse graphs with $d << n$).
    \item We conduct extensive experiments using both real and synthetic datasets, and the results show that (1) the proposed algorithms are up to four orders of magnitude faster than all baselines and (2) finding MaxBPs work better in a fraud detection task than enumerating MBPs.
\end{itemize}

\noindent\textbf{Roadmap.} The rest of this paper is organized below. Section~\ref{sec:problem} defines the problem and shows its NP-hardness. Section~\ref{sec:alg} presents the branch-and-bound algorithms \texttt{BasicBB} and \texttt{FastBB}. Section~\ref{sec:framework} presents the frameworks \texttt{PB}, \texttt{IE} and \texttt{PBIE}. We conduct extensive experiments in Section~\ref{sec:exp}. Section~\ref{sec:related} reviews the related work and Section~\ref{sec:conclusion} concludes the paper.

\section{problem definition}
\label{sec:problem}
Let $G=(L\cup R,E)$ be an undirected and unweighted bipartite graph, where $L$ and $R$ are two disjoint vertex sets and $E$ is an edge set. 
For the graph $G$, we use $V(G)$, $L(G)$, $R(G)$, and $E(G)$ to denote its set of vertices, left side, right side and set of edges, respectively, i.e., $V(G)=L\cup R$, $L(G)=L$, $R(G)=R$, and $E(G) = E$. 
Given $X\subseteq L$ and $Y\subseteq R$, we use $G[X\cup Y]$ to denote the induced (bipartite) graph of $G$, i.e., $G[X\cup Y]$ includes the set of vertices $X\cup Y$ and the set of edges between $X$ and $Y$.
All subgraphs considered in this paper are induced subgraphs. 
We use $H$ or $(X,Y)$ as a shorthand of $H=G[X\cup Y]$.

Given $v\in L$, we use $\Gamma(v,R)$ (resp. $\overline{\Gamma}(v,R)$) to denote the set of neighbours (resp. non-neighbours) of $v$ in $R$, i.e., $\Gamma(v,R)=\{u\mid (v,u)\in E\ \text{and}\ u\in R\}$ (resp. $\overline{\Gamma}(v,R)=\{u\mid (v,u)\notin E\ \text{and}\ u\in R\}$). We define $\delta(v,R)=|\Gamma(v,R)|$ and $\overline{\delta}(v,R)=|\overline{\Gamma}(v,R)|$. 
We use $d$ to denote the maximum degree of vertex in $G$.
We have symmetric definitions for each vertex $u\in R$. 
%
Next, we review the definition of $k$-biplex~\cite{yu2021efficient}.

\begin{definition}[$k$-biplex~\cite{yu2021efficient}]
Given a graph $G = (L, R, E)$, a positive integer $k$, $X\subseteq L$ and $Y\subseteq R$, a subgraph $G[X\cup Y]$ is said to be a $k$-biplex
if $\overline{\delta}(v,Y)\leq k$, $\forall v\in X$ and $\overline{\delta}(u,X)\leq k$, $\forall u\in Y$.
\end{definition}

A $k$-biplex $H$ is said to be maximal if there is no other $k$-biplex $H'$ containing $H$, i.e., $V(H)\subseteq V(H')$. 
{Large real graphs usually involve numerous maximal $k$-biplexes and most of them highly overlap.} 
In this paper, we aim to find $K$ maximal $k$-biplexes with the most edges, where $K$ is a positive integral user-parameter.
In addition, we consider two size constraints $\theta_L$ and $\theta_R$ on each maximal $k$-biplex $H$ to be found, namely $|L(H)|\geq \theta_L$ and $|R(H)|\geq \theta_R$. These constraints would help to filter out some skewed maximal $k$-biplexes, i.e., the number of vertices at one side is extremely larger than that at the other side. 
{To guarantee that all found maximal $k$-biplexes are connected, we further require $\theta_L\geq 2k+1$ and $\theta_R\geq 2k+1$ based on the following lemma.}

\begin{lemma}
A $k$-biplex $H$ is connected if $|L(H)|\geq 2k+1$ and $|R(H)|\geq 2k+1$.
\end{lemma}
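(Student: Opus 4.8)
The plan is to argue by contradiction. Assume $H=(X,Y)$ with $X=L(H)$, $Y=R(H)$, $|X|\geq 2k+1$ and $|Y|\geq 2k+1$ is a $k$-biplex that is \emph{not} connected. Then $V(H)$ splits into two nonempty vertex sets $A$ and $B$ with no edges between them (group the connected components into two blocks). Partition each side accordingly, writing $X_A=A\cap X$, $X_B=B\cap X$, $Y_A=A\cap Y$, $Y_B=B\cap Y$. The whole proof then rests on one elementary observation: since there is no edge crossing between $A$ and $B$, any left vertex in one block is a non-neighbour of \emph{every} right vertex in the other block, and symmetrically. Translating this into the $k$-biplex constraints $\overline{\delta}(v,Y)\leq k$ and $\overline{\delta}(u,X)\leq k$ will force the blocks to be too small, contradicting the size assumptions.

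Concretely, I would first record the four size bounds that the crossing-edge-free condition yields: if $X_A\neq\emptyset$ then $Y_B\subseteq\overline{\Gamma}(v,Y)$ for any $v\in X_A$, so $|Y_B|\leq k$; symmetrically $X_B\neq\emptyset$ forces $|Y_A|\leq k$, while $Y_A\neq\emptyset$ forces $|X_B|\leq k$ and $Y_B\neq\emptyset$ forces $|X_A|\leq k$. Next I would apply pigeonhole to $|Y|=|Y_A|+|Y_B|\geq 2k+1$, which guarantees that at least one of the two right parts has $\geq k+1$ vertices; by the symmetry between $A$ and $B$ I may assume $|Y_A|\geq k+1$. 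This immediately kills $X_B$: were $X_B$ nonempty, the bound above would give $|Y_A|\leq k$, a contradiction, so $X_B=\emptyset$ and hence $X_A=X$.

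The contradiction is then sealed by looking at the remaining content of $B$. Since $B$ is nonempty and $X_B=\emptyset$, we have $B=Y_B\neq\emptyset$; pick any $u\in Y_B$. Because $u\in B$ has no neighbour in $A\supseteq X_A=X$, every vertex of $X$ is a non-neighbour of $u$, so $\overline{\delta}(u,X)\geq |X|\geq 2k+1>k$, violating the defining inequality of a $k$-biplex. This completes the contradiction and proves connectivity.

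I expect the only delicate point to be the bookkeeping of which parts may be empty, i.e.\ making the case analysis airtight so that each size bound is only invoked when its hypothesis ($X_A$, $X_B$, $Y_A$, or $Y_B$ nonempty) actually holds. The pigeonhole step together with the explicit symmetry reduction ($|Y_A|\geq k+1$ without loss of generality) is what lets me avoid enumerating all of the sub-cases separately, so stating that reduction cleanly will be the main thing to get right; everything after it is a one-line counting contradiction.
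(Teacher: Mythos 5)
Your proposal is correct and follows essentially the same route as the paper's own proof: both argue by contradiction that a disconnection forces some vertex to be non-adjacent to more than $k$ vertices on the other side, using the size bounds $|L(H)|\geq 2k+1$ and $|R(H)|\geq 2k+1$. If anything, your bookkeeping is slightly more careful than the paper's, which tacitly picks a left vertex from \emph{each} component and thus does not explicitly address the degenerate case where one component contains only right-side vertices --- precisely the case your $X_B=\emptyset$ branch handles.
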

\begin{proof}
\roundA
This can be proved by contradiction. Suppose $H$ is not connected and is partitioned into two connected components, namely $H_1$ and $H_2$. We derive the contradiction by showing that $H$ is not a $k$-biplex: for a vertex $v_1$ in $L(H_1)$, it disconnects more than $k$ vertices, i.e., $\overline{\delta}(v_1,R(H))\geq |R(H_2)| \geq k+1$. Specifically, we derive $\overline{\delta}(v_1,R(H))\geq |R(H_2)|$ since $v_1$ from $H_1$ disconnects all vertices in $H_2$ based on the assumption, and we derive $|R(H_2)|\geq k+1$ by (1) for a vertex $v_2$ in $L_2$, $|R(H_2)|\geq \delta(v_2,R(H))$ since all neighbours of $v_2$ within $H$ reside in $R(H_2)$ based on the assumption and (2) $\delta(v_2,R(H))\geq R(H)-k\geq 2k+1-k\geq k+1$ since $v_2$ disconnects at most $k$ vertices ($H$ is a $k$-biplex) and $R(H)\geq 2k+1$.  
\end{proof}

We formalize the problem studied in this paper as follows. 
\begin{problem}[Maximum $k$-biplex Search] 
Given a bipartite graph $G=(L\cup R,E)$, four positive integers $K > 0$, $k> 0$, $\theta_L\geq 2k+1$ and $\theta_R\geq 2k+1$, the maximum $k$-biplex search problem aims to find $K$ maximal $k$-biplexes such that each found maximal $k$-biplex $H$ satisfies that $|L(H)|\geq \theta_L$, $|R(H)|\geq \theta_R$ and $|E(H)|$ is larger than $|E(H')|$ for any other maximal $k$-biplex $H'$ that is not returned.
\end{problem}

In this paper, we use MBP and MaxBP as a shorthand of a maximal $k$-biplex and one of the $K$ maximal $k$-biplexes with the most edges, respectively.
%

{
\smallskip
\noindent\textbf{NP-hardness.} The maximum $k$-biplex search problem is NP-hard, which we present in the following lemma (with its proof provided 
{\LC in Section~\ref{sec:proof}}
).
\begin{lemma}
The maximum $k$-biplex search problem is NP-hard.
\label{lemma:np}
\end{lemma}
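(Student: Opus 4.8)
The plan is to prove NP-hardness by a polynomial-time reduction from the \emph{Maximum Edge Biclique} problem, i.e., deciding for a bipartite graph and an integer $t$ whether it contains a complete bipartite subgraph $A\times B$ with $|A|\cdot|B|\ge t$ edges, which is known to be NP-complete. I would work with the decision version of our problem: fix $K=1$ and $\theta_L=\theta_R=2k+1$, and ask whether $G$ admits a maximal $k$-biplex $H$ with $|L(H)|\ge\theta_L$, $|R(H)|\ge\theta_R$ and $|E(H)|\ge t'$ for a suitable target $t'$. A useful first observation is that the \emph{maximality} requirement costs nothing here: any $k$-biplex extends to a maximal one without losing edges and while only increasing both side sizes, so it suffices to reason about arbitrary $k$-biplexes and the size constraints are automatically preserved under extension.

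From a Maximum Edge Biclique instance $G=(L\cup R,E)$ on $n$ vertices, I would build an \emph{inflated} graph $G'$ in which every vertex is replaced by a set of $m$ false twins (copies sharing the same neighborhood), where $m$ is a polynomially bounded integer to be fixed later. A biclique $A\times B$ of $G$ with $|A|\cdot|B|=e^*$ then inflates to a complete bipartite subgraph of $G'$ on $m|A|$ and $m|B|$ vertices; this is a $0$-biplex, hence a $k$-biplex, with exactly $m^2e^*$ edges and with both sides of size at least $m\ge 2k+1$. This yields the easy (forward) direction: the optimum value of our problem on $G'$ is at least $m^2 e^*$.

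The crux, and what I expect to be the main obstacle, is the reverse bound: showing that \emph{no} $k$-biplex of $G'$ can have many more than $m^2 e^*$ edges, so that the slack $k$ cannot be exploited to beat genuine bicliques. For an optimal $k$-biplex $H$, let $(X,Y)$ be the original vertices it touches and let $x_v,y_u\in\{1,\dots,m\}$ be the multiplicities of their copies in $H$. The $k$-biplex constraint forces, for each touched left vertex $v$, the total multiplicity of its right non-neighbours in $H$ to be at most $k$. I would then split $Y$ into a \emph{high} part ($y_u>k$) and a \emph{low} part ($y_u\le k$), and symmetrically for $X$. The constraint immediately implies that every high-multiplicity vertex is adjacent to all high-multiplicity vertices on the opposite side, so $X_{\mathrm{high}}\times Y_{\mathrm{high}}$ is a genuine biclique of $G$ and contributes at most $m^2e^*$ edges, whereas all edges incident to a low-multiplicity vertex contribute at most $O(kmn^2)$ in total. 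Choosing $m>2kn^2$ makes this impurity term strictly smaller than $m^2$, so the optimum lies in $[\,m^2e^*,\ m^2(e^*+1)\,)$ and hence recovers $e^*$ exactly via $e^*=\lfloor \mathrm{OPT}/m^2\rfloor$.

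Putting these together, setting $t'=m^2 t$ makes ``$G'$ has a maximal $k$-biplex with at least $t'$ edges (respecting the size constraints)'' equivalent to ``$G$ has a biclique with at least $t$ edges'', and since $m=O(kn^2)$ the whole construction is polynomial, completing the reduction. I expect the high/low multiplicity dichotomy and the verification of the impurity bound to be the only delicate parts; the forward direction, the handling of maximality, and the checking of the size thresholds $\theta_L=\theta_R=2k+1$ should all be routine.
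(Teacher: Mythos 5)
Your proposal is correct in its essentials, but it takes a genuinely different route from the paper. The paper reduces from \emph{maximum clique}: it builds a bespoke bipartite gadget with $\mathcal{L}=V$ and $\mathcal{R}=E\cup W\cup U$ (two padding sets), uses complement-style adjacency ($v$ adjacent to $e$ iff $v\notin e$), and closes the argument with a delicate polynomial case analysis on $|X|$ (Cases 2.1/2.2); the full details are written out only for $k=1$, and the reduction is stated with $\theta_L=\theta_R=0$, i.e., it sidesteps the size thresholds. You instead reduce from \emph{maximum edge biclique} (NP-complete by Peeters' result, and indeed the paper itself cites MEBS as NP-hard) via a false-twin blow-up, with the high/low multiplicity dichotomy as the key lemma: high-multiplicity vertices on opposite sides must be mutually adjacent, so they project to a genuine biclique of $G$, while low-multiplicity vertices contribute at most $2kmn^2 < m^2$ edges once $m>2kn^2$. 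I checked the dichotomy and the impurity bound and they hold (in fact every touched vertex, not only the high ones, must be adjacent to every high-multiplicity vertex on the other side, which only strengthens your claim). What each approach buys: the paper's is self-contained from the most classical source problem but is intricate and $k$-specific; yours is uniform in $k$, conceptually cleaner, and — unlike the paper's — works with valid thresholds $\theta_L=\theta_R=2k+1$ as required by the problem definition, at the cost of invoking the NP-completeness of maximum edge biclique as a black box. Your observation that maximality is free (extend any $k$-biplex greedily without losing edges or shrinking either side) is also needed and correctly handled.
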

}

\noindent\textbf{Remarks.}
In the following sections (Section~\ref{sec:alg} and Section~\ref{sec:framework}), we focus on the setting of $K = 1$ (i.e., the problem becomes to find a MBP with the maximum number of edges) when presenting the algorithms for ease of presentation. We note that these algorithms can be naturally extended for general settings of $K$ with minimal efforts (i.e., we maintain $K$ MaxBPs instead of $1$ MaxBP found so far throughout the algorithm for pruning) and are tested in Section~\ref{sec:exp}.

\if 0
We now complete the proof for $k=1$ and then extend it for the general case $k>0$. We first construct an instance $\mathcal{G}=(\mathcal{L}\cup\mathcal{R},\mathcal{E})$ and $\alpha'$ of BIPLEX as follows.
\begin{equation}
    \mathcal{L}=V \text{ and }  \mathcal{R}=E\cup W\cup U,\  |W|= \frac{1}{2}\alpha^2-\frac{5}{2}k\alpha,\  |U|=2k\alpha \nonumber
\end{equation}
where $U\cup W$ is a set of new elements. Let $V=\{v_1,v_2,...,v_{2\alpha}\}$ and $U=\{u_{1,1},u_{1,2},...,u_{1,2\alpha}\}\cup ...\cup \{u_{k,1},u_{k,2},...,u_{k,2\alpha}\} $, we have
\begin{gather}
    \mathcal{E}=\mathcal{E}(\mathcal{G}[V\cup E])\cup \mathcal{E}(\mathcal{G}[V\cup W])\cup \mathcal{E}(\mathcal{G}[V\cup U])\nonumber\\
    =\{(v,e)\mid v\in V, e\in E, v\notin e\}\cup \{(v,w)\mid v\in V, w\in W\} \nonumber\\
    \cup \{(v_i,u_{l,j})\mid v_i\in V, u_{l,j}\in U, i\neq j\},\nonumber\\
    \alpha'=\frac{1}{2}\alpha^3+(2k-1)\alpha^2-k\alpha\nonumber.
\end{gather}
To guarantee $|W|\geq 0$, we can assume $\alpha\geq 5k$. Given a positive constant $k$, this construction can be finished in polynomial time. The rest of the proof is similar to the case $k=1$, although it might take more time to verify. 
\fi
\section{Branch-and-Bound Algorithms}
\label{sec:alg}

 We first introduce a \emph{branch-and-bound} algorithm called \texttt{BasicBB}, which is based on a conventional and widely-used branching strategy that we call \emph{Bron-Kerbosch (BK) branching}~\cite{bron1973algorithm}, in Section~\ref{subsec:basicbb}. \texttt{BasicBB} has the worst-case time complexity $O(|V|\cdot d\cdot 2^{|V|})$. 
 We then introduce a new branching strategy called \emph{Symmetric-BK (Sym-BK) branching}, which is symmetric to the BK branching but better suits our problem of finding MaxBP, in Section~\ref{subsec:sym-bk}. We further present our method for determining an ordering of vertices, which is critical for Sym-BK branching, in Section~\ref{subsec:sym-bk-ordering}. We finally introduce a new branch-and-bound algorithm called \texttt{FastBB}, which is based on Sym-BK branching, and analyze its time complexity in Section~\ref{subsec:fastbb}. \texttt{FastBB} has its worst-case time complexity $O(|V|\cdot d \cdot \gamma_k^{|V|})$, where $\gamma_k$ is strictly smaller than 2 and depends on the setting $k$. For example, when $k=1$, $\gamma_k$ is $1.754$.

\begin{figure}[t]
	\centering
	\vspace{-0.1in}
	\includegraphics[width=0.85\linewidth]{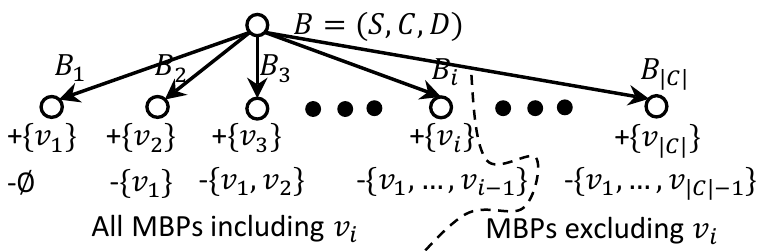}
	\\
	(a) BK Branching
	\vspace{0.10in}
	\\
	\includegraphics[width=0.85\linewidth]{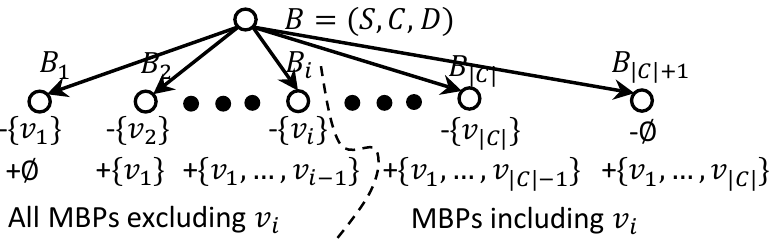}
	\\
	(b) Sym-BK Branching
	\vspace{-0.15in}
	\caption{Illustration of two branching strategies. Each node denotes a branch $(S,C,D)$. The notation ``+'' means to include a vertex by adding it $S$ and ``-'' means to exclude a vertex by adding it to $D$.}
	\label{fig:branching_scheme}
\end{figure}


\subsection{A BK Branching based Branch-and-Bound Algorithm: \texttt{BasicBB}}
\label{subsec:basicbb}



Our first attempt is to adapt the seminal \emph{Bron-Kerbosch} (BK) algorithm~\cite{bron1973algorithm}. It recursively partitions the search space (i.e., the set of all possible MBPs) to multiple sub-spaces via \emph{branching}.
Specifically, each sub-space is represented by a triplet of three sets $(S, C, D)$ as explained below.
\begin{itemize}[leftmargin=*]
	\item \textbf{Partial set $S$.} Set of the vertices that \emph{must} be included in every MBP within the space.
	\item \textbf{Candidate set $C$.} Set of the vertices that \emph{can} be included in $S$ in order to form a MBP within the space.
	\item \textbf{Exclusion set $D$.} Set of vertices that \emph{must not} be included in any MBP within the space.
\end{itemize}
We further denote by $S_L$, $C_L$, and $D_L$ the left side of $S$, $C$, and $D$, respectively, and define $S_R$, $C_R$, and $D_R$ similarly for the right side.

The recursive process of the BK algorithm starts from the full search space with $S = \emptyset$, $C = V$, and $D = \emptyset$. 
Consider the branching step at a current branch $B = (S, C, D)$.
Let $\langle v_1, v_2, ..., v_{|C|} \rangle$ be a sequence of the vertices in $C$.
The branching step would partition the space to $|C|$ sub-spaces (and correspondingly $|C|$ branches), where the $i^{th}$ branch, denoted by $B_i = (S_i, C_i, D_i)$, \emph{includes} $S$ and $v_i$ and \emph{excludes} $v_1, v_2, ..., v_{i-1}$. Formally, for $1\le i \le |C|$, we have 
\begin{equation}
S_i = S \cup \{ v_i\};~~D_i = D \cup \{v_1, v_2, ..., v_{i-1}\};~~C_i = C - \{ v_1, v_2, ..., v_i\}
\label{equation:bk}
\end{equation}
For illustration, consider Figure~\ref{fig:branching_scheme}(a). 
Note that $C_i$ and $D_i$ can be further refined by removing those vertices that cannot be included to $G[S_i]$ to form a $k$-biplex.
%

We call the above branching strategy \emph{BK branching}. BK branching essentially corresponds to a recursive \emph{binary} branching process. It first splits the current branch into two branches, one including $v_1$ (this is the branch $B_1$) and the other excluding $v_1$. Then, it further splits the latter into two branches, one including $v_2$ (this is the branch $B_2$) and the other excluding $v_2$. It continues the process until the last branch, which excludes $v_1, v_2, ..., v_{|C|-1}$ and includes $v_{|C|}$ (this corresponds to the branch $B_{|C|}$), is formed. In particular, the branches $B_1, B_2, ..., B_i$ cover all MBPs including $v_i$ and branches $B_{i+1}, ..., B_{|C|}$ cover those excluding $v_i$, as indicated by the dashed line in Figure~\ref{fig:branching_scheme}(a).

%
We note that BK branching relies on an ordering of vertices in the candidate set $C$, i.e., $\langle v_1, v_2, ..., v_{|C|} \rangle$, for producing branches. {\roundA In this paper, we follow the existing studies~\cite{zhang2014finding,DBLP:conf/ijcai/AbidiZCL20} and use the \emph{non-decreasing vertex degree ordering} (where vertices are ranked in a non-decreasing order of their degrees in $S\cup C$, i.e., $\delta(v_i,S\cup C)\leq \delta(v_j, S\cup C)$ for any $i<j$) since this would help with effective pruning as shown empirically.} We note that normally the ordering does not affect the worst-case theoretical time complexity of the algorithm based on BK branching.

During the recursive search process, some pruning techniques can be applied. 
Let $B = (S, C, D)$ be a branch. \underline{First}, branch $B$ can be pruned if $S$ is not a $k$-biplex since (1) each partial set in the search space corresponding to this branch is a \emph{superset} of $S$ and (2) based on the hereditary property of $k$-biplex, any superset of a non-$k$-biplex is not a $k$-biplex. 
\underline{Second}, branch $B$ can be pruned if an upper bound of the left side (resp. the right side) of a $k$-biplex in the space is smaller than $\theta_L$ (resp. $\theta_R$) based on the problem definition.
\underline{Third}, branch $B$ can be pruned if an upper bound of the number of edges in a $k$-biplex in the space is smaller than the largest one of a $k$-biplex known so far. {\revision\underline{Fourth}}, branch $B$ can be pruned if there exists a vertex in $D$ such that including this vertex to each $k$-biplex in the space would still result in a $k$-biplex. 
We will elaborate on these pruning rules in detail in Section~\ref{subsec:fastbb}.
%
Finally, the recursive process of the BK algorithm terminates at a branch $B = (S, C, D)$ if $G[S\cup C]$ is a $k$-biplex since $G[S \cup C]$ would be the MaxBP within the space of the branch.


We call this BK algorithm, which is  a \emph{branch-and-bound} algorithm based on BK branching and the aforementioned four pruning techniques, \texttt{BasicBB}, and present its pseudo-code in Algorithm~\ref{alg:enumeration_scheme}. 
%
Similar to many existing algorithms that are based on BK branching, the worst-case time complexity of \texttt{BasicBB} is $O(|V|\cdot d\cdot 2^{|V|})$ (i.e., $O^*(2^{|V|})$)~\cite{yu2021efficient,DBLP:journals/sadm/SimLGL09}, 
though its practical performance can be boosted by the the pruning techniques.

\begin{algorithm}{}
\small
\caption{The branch-and-bound algorithm based on BK branching: \texttt{BasicBB}}
\label{alg:enumeration_scheme}
\KwIn{A graph $G(L\cup R,E)$, $k$, $\theta_L$ and $\theta_R$}
\KwOut{The maximal $k$-biplex $H^*$ with the most edges}
$H^* \leftarrow G[\emptyset]$; \tcp{Global variable}
\texttt{BasicBB-Rec}$(\emptyset,L\cup R,\emptyset)$;\ \ \textbf{return} $H^*$\;
\SetKwBlock{Enum}{Procedure \texttt{BasicBB-Rec}$(S,C,D)$}{}
\SetKwBlock{update}{Procedure \texttt{Update}$(S,C)$}{}
\Enum{

    \tcc{Termination}
    \If{$G[S \cup C]$ is $k$-biplex}{
       {$H^*\leftarrow G[S \cup C]$ if $|E(G[S \cup C])|> |E(H^*)|$};\ \textbf{return}; 
    }
    
    \tcc{Pruning}
    \lIf{any of pruning conditions is satisfied (details in Section~\ref{subsec:fastbb})}{
        \textbf{return} 
    }
   
    \tcc{BK Branching}
    Create $|C|$ branches $B_i = (S_i, C_i, D_i)$ based on Equation~(\ref{equation:bk});\\
    \For{each branch $B_i$}{
            {\revision \texttt{BasicBB-Rec}$(S_i,C_i,D_i)$;}
    }
}
\end{algorithm}

\subsection{A New Branching Strategy: Sym-BK Branching}
\label{subsec:sym-bk}

We observe that there exists a branching strategy, which is natural and \emph{symmetric} to BK branching. 
Specifically, consider the branching step at a current branch $B = (S, C, D)$.
Let $\langle v_1, v_2, ..., v_{|C|} \rangle$ be a sequence of the vertices in $C$. This branching step would partition the space to $(|C| + 1)$ sub-spaces (and correspondingly $(|C| + 1)$ branches), where the $i^{th}$ branch, denoted by $B_i = (S_i, C_i, D_i)$, \emph{includes} $S$ and $v_1, v_2, ..., v_{i-1}$ and \emph{excludes} $v_i$. Here, $v_0$ and $v_{|C|+1}$ both correspond to null. Formally, for $1\le i \le |C|+1$, we have 
\begin{equation}
S_i = S \cup \{ v_1, v_2, ..., v_{i-1}\};~~D_i = D \cup \{v_i\};~~ C_i = C - \{ v_1, v_2, ..., v_i\}
\label{equation:sym-bk}
\end{equation}
For illustration, consider Figure~\ref{fig:branching_scheme}(b).  Note that $C_i$ and $D_i$ can be further refined by removing those vertices that cannot be included to $G[S_i]$ to form a $k$-biplex.

We call the above branching strategy \emph{symmetric-BK (Sym-BK) branching}. Sym-BK branching corresponds to another recursive \emph{binary} branching process, which is symmetric to that of the BK branching. Specifically, 
it first splits the current branch into two branches, one excluding $v_1$ (this is the branch $B_1$) and the other including $v_1$. Then, it further splits the latter into two branches, one excluding $v_2$ (this is the branch $B_2$) and the other including $v_2$. It continues the process until the last branch, which includes $v_1, v_2, ..., v_{|C|}$ (this corresponds to the branch $B_{|C|+1}$), is formed. In particular, the branches $B_1, B_2, ..., B_i$ cover all MBPs \emph{excluding} $v_i$ and branches $B_{i+1}, ..., B_{|C|+1}$ cover those \emph{including} $v_i$, as indicated by the dashed line in Figure~\ref{fig:branching_scheme}(b).

%
%
%



\smallskip
\noindent\textbf{Sym-BK branching vs. BK branching.} They are symmetric to each other and both of them are natural branching strategies. 
They differ in that among two branches formed by a binary branching based on a vertex, BK branching recursively partitions the branch \emph{excluding} the vertex while Sym-BK recursively partitions one \emph{including} the vertex.
Sym-BK branching produces one more branch than BK branching at each branching step (i.e., $|C|+1$ branches vs. $|C|$ branches), but this difference of one extra branch is negligible given that there can be many branches produced at the branching step. 
Compared with BK branching, Sym-BK branching has the following advantages when adopted for our problem of finding the MaxBP.

\underline{First}, at each branching step, it would produce branches with \emph{bigger} partial sets $S_i$ (note that the $i^{th}$ branch by Sym-BK branching involves $|S| + (i-1)$ vertices in the partial set while that by BK branching involves $|S| + 1$ vertices). Consequently, the produced branch would have a larger chance to be pruned due to the hereditary property of $k$-biplex (if a set of vertices is not a $k$-biplex, then none of its supersets is, but not vice versa). 
\underline{Second}, the partial set of the $j^{th}$ branch, i.e., $S_j$, is always a superset of that of the $i^{th}$ branch, i.e., $S_i$, for any $j > i$. Consequently, if $S_i$ is not a $k$-biplex (which means the branch $B_i$ can be pruned), then all branches following $B_i$ can be pruned (since their partial sets are supersets of $S_i$ and thus they are not $k$-biplexes either based on the hereditary property). 

To illustrate, consider the example in Figure~\ref{fig:branching_case}. 
One branching step of Sym-BK branching is shown in Figure~\ref{fig:branching_case}(b). The fourth branch has the partial set of $S_4 = \{u_0,v_0,u_1,u_2,u_4\}$, which is not a $k$-biplex. All the following branches have their partial sets as supersets of $S_4$, and thus they can be pruned immediately (as indicated by the shaded color in the figure). 

\begin{figure}[t]
	\centering
	\vspace{-0.1in}
	\includegraphics[width=0.78\linewidth]{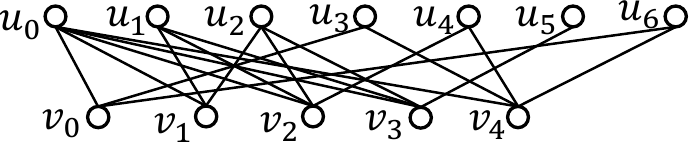}
	\\
	(a) Input graph used throughout the paper
	\vspace{0.10in}
	\\
	\includegraphics[width=0.85\linewidth]{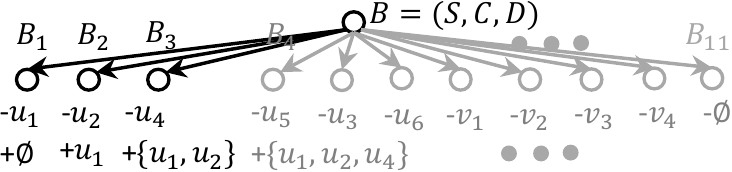}
	\\
	(b) Case 1 ($S=\{u_0,v_0\}$, $C=\{u_1,u_2,u_4,u_5,u_3,u_6,v_1,v_2,v_3,v_4\}$ and $D=\emptyset$)
	\vspace{0.10in}
	\\
	\includegraphics[width=0.85\linewidth]{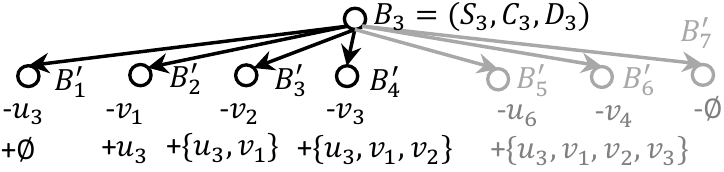}
	\\
	(c) Case 2 ($S_3\!=\!\{u_0,v_0,u_1,u_2\}$, $C_3\!=\!\{u_3,v_1,v_2,v_3,u_6,v_4\}$ and $D_3\!=\!\{u_4,u_5\}$)
 	\vspace{-0.15in}
	\caption{Illustration of Sym-BK branching ($k=2$).}
	\label{fig:branching_case}
	\vspace{-0.15in}
\end{figure}

In fact, with Sym-BK branching and a carefully-designed ordering of vertices (details will be introduced in Section~\ref{subsec:sym-bk-ordering}), our new branch-and-bound algorithm \texttt{FastBB} would have the worst-case time complexity of $O(|V|\cdot d\cdot \gamma_k^{|V|})$ with $\gamma_k < 2$, which is strictly smaller than that of the \texttt{BasicBB} algorithm based on BK branching (details will be introduced in Section~\ref{subsec:fastbb}).

\smallskip
\noindent\textbf{Remarks.}
In~\cite{DBLP:conf/aaai/ZhouXGXJ20}, the authors design a branching strategy, which performs one of two branching operations at a branch depending on the situation: (1) generating two branches based on a \emph{single} vertex in $C$ (i.e., one including and the other not including this vertex) and (2) generating at most $(|C|+1)$ branches based on all vertices in $C$ with pruning. We call this branching strategy \emph{hybrid branching}. Our Sym-BK branching is superior over hybrid branching for our problem in two aspects. First, Sym-BK branching has a more simplified form. Second, the branch-and-bound algorithm based on Sym-BK branching has lower worst-case time complexity \emph{theoretically} (details are in Section~\ref{subsec:fastbb}) and runs faster \emph{empirically} (details are in Section~\ref{sec:exp}).

\subsection{Sym-BK Branching: Ordering of Vertices}
\label{subsec:sym-bk-ordering}

The Sym-BK branching relies on an ordering of the vertices in $C$.
Recall that in the branches $B_1, B_2, ..., B_{|C|+1}$ generated by Sym-BK branching, the partial set of a branch $B_j$ is always a superset of that of a preceding branch $B_i$ ($i < j$). Therefore, our idea is to figure out a small subset $C'$ of vertices in $C$ such that including them \emph{collectively} to the partial set $S$ would violate the $k$-biplex definition. We then put these vertices \emph{before} other vertices in the ordering. 
In this way, the branch with the partial set of $S \cup C'$ and all the following branches can be pruned directly. 
We elaborate on this idea in detail next.

We notice that $G[S\cup C]$ is not a $k$-biplex since otherwise the recursion would terminate at this branch. It means that there exists at least a vertex in $S \cup C$, which has more than $k$ disconnections within $G[S\cup C]$. Without loss of generality, we assume that the vertex is from the left side and denote it by $\hat{v} \in S_L \cup C_L$. Consider the set of vertices that disconnect $\hat{v}$ in $C_R$, i.e., $\overline{ \Gamma} ( \hat{v}, C_R)$. We know that including $\overline{ \Gamma} ( \hat{v}, C_R)$  to $S$ collectively would violate the $k$-biplex definition. Specifically, if $\hat{v}$ is already in $S_L$, i.e., $\hat{v} \in S_L$, we can include at most $k-\overline{\delta}(\hat{v}, S_R)$ vertices from $\overline{ \Gamma} ( \hat{v}, C_R)$ to $S$ without violating the $k$-biplex definition; and if $\hat{v}$ is not yet in $S_L$, i.e., $\hat{v} \in C_L$, we can include $\hat{v}$ together with at most $k-\overline{\delta}(\hat{v}, S_R)$ vertices from $\overline{ \Gamma} ( \hat{v}, C_R)$ to $S$ without violating the $k$-biplex definition. For the simplicity of notations, we define
\begin{equation}
a = k-\overline{\delta}(\hat{v}, S_R); ~~b = \overline{\delta}(\hat{v}, C_R).
\end{equation}
Intuitively, $a$ means the greatest possible number of disconnections that $\hat{v}$ can have when including more vertices from $C$ to $S$ for forming MBPs.
Note that we have $0\le a \le k$ (since $S$ is a $k$-biplex and thus $\overline{\delta}(\hat{v}, S_R) \le k$) and $a < b$ (since $b - a = \overline{\delta}(\hat{v}, S_R \cup C_R) - k > 0$). 
Based on $\hat{v}$, we define an ordering of the vertices in $C$.

\smallskip
\noindent\textbf{Case 1}: $\hat{v} \in S_L$. In this case, we define the ordering as follows.
\begin{equation}
	\langle u_1, u_2, ..., u_b, u_{b+1}, ..., u_{|C|} \rangle,
	\label{equation:sym-bk-case1}
\end{equation}
where $u_1, u_2, ..., u_b$ are vertices from $\overline{\Gamma}(\hat{v}, C_R)$ in any order and $u_{b+1}, u_{b+2}, ..., u_{|C|}$ are vertices from $C  - \overline{\Gamma}(\hat{v}, C_R)$ in any order.
Based on this ordering, the branch $B_{a+2}$ would have the partial set $S\cup \{u_1, u_2, ..., u_{a+1}\}$, which is not a $k$-biplex (since $\hat{v}$ would have more than $k$ disconnections). Therefore, branches $B_{a+2}, B_{a+3}, ..., B_{|C|+1}$ can be pruned and only the first $(a+1)$ branches, namely $B_1, B_2, ..., B_{a+1}$, would be kept.
{\roundA
To illustrate, we consider a branch with $S=\{u_0,v_0\}$, $D=\emptyset$ and $C=\{u_1,u_2,u_3,u_4,u_5,u_6,v_1,v_2,v_3,v_4\}$ for finding a MaxBP with $k=2$ from the input graph in Figure~\ref{fig:branching_case}(a). Based on $v_0$ in $S$ that disconnects 3 vertices, i.e., $\{u_1,u_2,u_4\}$, we define the ordering $\langle u_1,u_2,u_4,u_5,u_3,u_6,v_1,v_2,v_3,v_4 \rangle$ for Sym-BK branching as shown in Figure~\ref{fig:branching_case}(b). The branches $B_4, ..., B_{11}$ can be pruned since $B_4$ has the partial set $\{u_0,u_1,u_2,u_4,v_0\}$ not a $k$-biplex ($v_0$ has more than $k=2$ disconnections).
}

\smallskip
\noindent\textbf{Case 2}: $\hat{v} \in C_L$. In this case, we define the ordering as follows.
\begin{equation}
	\langle \hat{v}, u_1, u_2, ..., u_b, u_{b+2}, ..., u_{|C|} \rangle,
	\label{equation:sym-bk-case2}
\end{equation}
where $u_1, u_2, ..., u_b$ are vertices from $\overline{\Gamma}(\hat{v}, C_R)$ in any order and $u_{b+2}, u_{b+2}, ..., u_{|C|}$ are vertices from $C  - \overline{\Gamma}(\hat{v}, C_R) - \{\hat{v}\}$ in any order.
Based on this ordering, the branch $B_{a+3}$ would have the partial set as $S\cup \{\hat{v}, u_1, u_2, ..., u_{a+1}\}$, which is not a $k$-biplex (since $\hat{v}$ would have more than $k$ disconnections). Therefore, branches $B_{a+3}, B_{a+4}, ..., B_{|C|+1}$ can be pruned and only the first $(a+2)$ branches, namely $B_1, B_2, ..., B_{a+2}$, would be kept.
{\roundA
To illustrate, we consider another branch $B_3$ with $S_3=\{u_0,u_1,u_2,v_0\}$, $C_3=\{u_3,u_6,v_1,v_2,v_3,v_4\}$ and $D_3=\{u_4,u_5\}$ in Figure~\ref{fig:branching_case}(c). Based on $u_3$ in $C$ that disconnects 3 vertices, i.e., $\{v_1,v_2,v_3\}$, we define the ordering $\langle u_3,v_1,v_2,v_3,u_6,v_4 \rangle$ for Sym-BK branching as shown in Figure~\ref{fig:branching_case}(c). 
The branches $B_5'$, $B_6'$ and $B_7'$ can be pruned since $B_5'$ has the partial set as $\{u_0,u_1,u_2,u_3,v_0,v_1,v_2,v_3\}$ not a $k$-biplex ($u_3$ has more than $k=2$ disconnections).
}

\if 0
{\color{red}
\begin{example}
Consider the example in Figure~\ref{fig:branching_case} for finding a MaxBP with $k=2$ from the input graph. To illustrate Case 1, we consider a branch with $S=\{u_0,v_0\}$, $D=\emptyset$ and $C=\{u_1,u_2,u_3,u_4,u_5,u_6,v_1,v_2,v_3,v_4\}$. Based on a vertex $v_0$ in $S$ that disconnects 3 vertices ($a=2$ and $b=3$ for this case), i.e., $\{u_1,u_2,u_4\}$, we thus define the ordering $\langle u_1,u_2,u_4,u_5,u_3,u_6,v_1,v_2,v_3,v_4 \rangle$ and make Sym-BK branching as shown in Figure~\ref{fig:branching_case}(b). We note that $B_4$ has the partial set as $\{u_0,u_1,u_2,u_4,v_0\}$ not a $k$-biplex (since $v_0$ disconnects three vertices, i.e., $\{u_1,u_2,u_4\}$). Hence, the branches $B_4$, $B_5$,..., $B_11$ can be pruned.
To illustrate Case 2, we consider another branch $B_3$ with $S_3=\{u_0,u_1,u_2,v_0\}$, $C_3=\{u_3,u_6,v_1,v_2,v_3,v_4\}$ and $D_3=\{u_4,u_5\}$. Based on a vertex $u_3$ in $C$ that disconnects 3 vertices ($a=2$ and $b=3$ for this case), i.e., $\{v_1,v_2,v_3\}$, we define the ordering $\langle u_3,v_1,v_2,v_3,u_6,v_4 \rangle$ and make Sym-BK branching in Figure~\ref{fig:branching_case}(c). We note that $B_5$ has the partial set as $\{u_0,u_1,u_2,u_3,v_0,v_1,v_2,v_3\}$ not a $k$-biplex (since $u_3$ disconnects three vertices, i.e., $\{v_1,v_2,v_3\}$). Hence, the branches  $B_5$, $B_6$ and $B_7$ can be pruned.
\end{example}
}
\fi 

We note that there could be multiple vertices, which have more than $k$ disconnections among $S\cup R$, and for each of them, we can define an ordering as above. We call these vertices \emph{candidate pivots} and the vertex that we pick for defining an ordering the \emph{pivot}. An immediate question is: \emph{which one should we select as the pivot among the candidate pivots?} To answer this question, we quantify the benefits of specifying the ordering based on a specific candidate pivot $\hat{v}$. There are two benefits (for simplicity, we discuss the case of $\hat{v} \in S_L$ only, and the other case is similar and thus omitted). \underline{Benefit 1:} $(|C| - a)$ branches, namely $B_{a+2}$, $B_{a+3}$, ..., $B_{|C|+1}$, are pruned for $\hat{v}$. 
Therefore, the smaller $a$ is, the larger the Benefit 1 is.
\underline{Benefit 2:} For Branch $B_{a+1}$, we have $|C_{a+1}| \le |C| - b$ since (1) $C_{a+1}$ is updated to be $C - \{u_1, u_2, ..., u_{a+1}\}$ (please refer to Equation~(\ref{equation:sym-bk-case1})) and (2) the vertices $u_{a+2}, u_{a+3}, ..., u_b$ can be further excluded from $C_{a+1}$ since including each of these vertices to $S_{a+1}$ would violate the $k$-biplex definition. 
Therefore, the larger $b$ is, the larger the Benefit 2 is.
In summary, for a vertex with a \emph{smaller} $a$ and/or a \emph{larger} $b$, the overall benefits would be more significant. Therefore, we select the candidate pivot $\hat{v}$ with the largest $(b - a) = \overline{\delta}(\hat{v}, S_R \cup C_R) - k$ as the pivot. Equivalently, it would select the candidate pivot with the most disconnections within $S \cup C$. Furthermore, to achieve a better worst-case time complexity (details will be introduced in Section~\ref{subsec:fastbb}), we first select the pivot among the pivot candidates in $S$ if possible; otherwise, we select one among those in $C$.

{\roundA
To illustrate, we consider again the example in Figure~\ref{fig:branching_case}. For a branch $B$ in Figure~\ref{fig:branching_case}(b), $v_0$ would be selected as the pivot since $v_0$ in $S$ has the number of disconnections more than $k$ and the greatest among other vertices in $S$. For another branch $B_3$ in Figure~\ref{fig:branching_case}(c), $u_3$ would be selected as the pivot since (1) every vertex in $S$ disconnects less than $k$ vertices and (2) $u_3$ has the number of disconnections more than $k$ and the greatest among other vertices in $C$.
}

{\revision
\smallskip
\noindent\textbf{Remarks.} We remark that the branch-and-bound method in~\cite{DBLP:conf/aaai/ZhouXGXJ20} selects the pivot with the smallest degree among all candidate pivots in $S\cup C$, from which our strategy differs in two aspects.
First, we select the pivot with the most disconnections as discussed above. Note that the pivot with the most disconnections could be different from that with the smallest degree in bipartite graphs. Second, we prioritize $S$ for selecting a pivot. The intuition behind is that (1) the worst-case running time of Case 1 (i.e., branching based on a pivot from $S$) is strictly smaller than that of Case 2 and (2) it would help to further improve the time complexity of Case 2 (details are in the proof of Theorem~\ref{lemma:time_complexity} in Section~\ref{subsec:fastbb}).}

\if 0
{\color{red}
\begin{example}
To illustrate the pivot selection strategy, we consider again the example in Figure~\ref{fig:branching_case} for finding a MaxBP with $k=2$ from the input graph. For a branch $B$ with $S=\{u_0,v_0\}$, $D=\emptyset$ and $C=\{u_1,u_2,u_3,u_4,u_5,u_6,v_1,v_2,v_3,v_4\}$, vertex $v_0$ would be selected as the pivot since $v_0$ is in $S$ and has the number of disconnections more than $k$ and the greatest among other vertices in $S$. For another branch $B_3$ with $S_3=\{u_0,u_1,u_2,v_0\}$, $C_3=\{u_3,u_6,v_1,v_2,v_3,v_4\}$ and $D_3=\{u_4,u_5\}$, $u_3$ would be selected as the pivot since (1) every vertex in $S$ disconnects less than $k$ vertices and (2) $u_3$ has the number of disconnections more than $k$ and the greatest among other vertices in $C$.
\end{example}
}
\fi

\subsection{A Sym-BK Branching based Branch-and-Bound Algorithm: \texttt{FastBB}}
\label{subsec:fastbb}

Based on the Sym-BK branching strategy and the aforementioned pruning techniques (details will be presented in this section), we design a branch-and-bound algorithm, called \texttt{FastBB}. The pseudo-code of \texttt{FastBB} is presented in Algorithm~\ref{alg:maxbp}.
\texttt{FastBB} differs from \texttt{BasicBB} only in the branching step (i.e., Lines 7 - 10 of Algorithm~\ref{alg:maxbp}). 
Next, we elaborate on the pruning conditions (Line 6 in Algorithm~\ref{alg:maxbp}) in detail.

%
%

\noindent\textbf{Pruning conditions.} 
Let $B = (S, C, D)$ be a branch.
We first define $\tau_L = \min_{u\in S_R}\delta(u,S_L\cup C_L) + k$ and $\tau_R = \min_{v\in S_L}\delta(v,S_R\cup C_R) + k$,
which can be verified to be the upper bound of the number of vertices at the left side and that at the right side of a MBP covered by the branch $B$, respectively.
%
We can prune the branch $B$ if any of the following four conditions is satisfied.
\begin{enumerate}[leftmargin=*]
    \item $S$ is not a $k$-biplex.
    \item $\tau_L<\theta_L$ or $\tau_R<\theta_R$. 
    \item $|E(G[S\cup C])|\leq |E(H^*)|$ or $\tau_L\times \tau_R \leq |E(H^*)|$, where $H^*$ is the MaxBP found so far. 
    \item There exists a vertex $v\in D_L$ such that $\overline{\delta}(v,S_R\cup C_R)\leq k$ and $\{w\in S_R\cup C_R \mid \overline{\delta}(w,S_L\cup C_L)\geq k\}\subseteq \Gamma(v,R)$ or symmetrically there exists such a vertex $u\in D_R$. 
\end{enumerate}
Condition (1) holds because of the hereditary property of $k$-biplex, Condition (2) is based on the size constraints of the two sides of MaxBP to be found, Condition (3) is based on the objective of the problem (i.e., to maximize the number of edges in a MBP), and Condition (4) holds because all $k$-biplexes covered by this branch (if any) would not be maximal (since an additional vertex $v$ or $u$ can be included in each of them without violating the $k$-biplex definition).

\begin{algorithm}{}
\small
\caption{The branch-and-bound algorithm based on Sym-BK branching: \texttt{FastBB}}
\label{alg:maxbp}
\KwIn{A graph $G(L\cup R,E)$, $k$, $\theta_L$ and $\theta_R$}
\KwOut{The maximal $k$-biplex $H^*$ with the most edges}
$H^* \leftarrow G[\emptyset]$; \tcp{Global variable}
\texttt{FastBB-Rec}$(\emptyset,L\cup R,\emptyset)$;\ \ \textbf{return} $H^*$\;

\SetKwBlock{Enum}{Procedure \texttt{FastBB-Rec}$(S,C,D)$}{}
\Enum{
    \tcc{Termination}
    \If{$G[S\cup C]$ is a $k$-biplex}{
         $H^*\leftarrow G[S\cup C]$ if $|E(G[S\cup C])|>|E(H^*)|$ and \textbf{return} 
    }
    
    \tcc{Pruning}
    \lIf{any of pruning conditions is satisfied
    }{
        \textbf{return} 
    }
    
    \tcc{Sym-BK Branching}
    Select a pivot vertex $\hat{v}$ and determine an ordering based on $\hat{v}$ (Section~\ref{subsec:sym-bk-ordering});\\
    \lIf{$\hat{v}\in S$}{
        Create $a+1$ branches $\{B_1,B_2,...,B_{a+1}\}$ (Equation~(\ref{equation:sym-bk}) and (\ref{equation:sym-bk-case1}))
    }\ElseIf{$\hat{v}\in C$}{
        Create $a+2$ branches $\{B_1,B_2,...,B_{a+2}\}$ (Equation~(\ref{equation:sym-bk}) and (\ref{equation:sym-bk-case2}))\;
    }
    \For{each {\LC created branch $B_i$}
    }{
        \texttt{FastBB-Rec}$(S_i,C_i,D_i)$
    }
}
\end{algorithm}

\smallskip
\noindent\textbf{Worst-case time complexity}. 
The worst-case time complexity of \texttt{FastBB} is strictly better than than of \texttt{BasicBB}, which we show in the following theorem.

\begin{theorem}
\label{lemma:time_complexity}
Given a bipartite graph $G$, \texttt{FastBB} finds the MaxBP in time $O(|V|\cdot d\cdot \gamma_k^{|V|})$ where $\gamma_k$ is the largest positive real root of $x^{k+4}-2x^{k+3}+x^2-x+1=0$. For example, when $k=1$, $2$ and $3$, $\gamma_k=1.754$, $1.888$ and $1.947$, respectively.
\end{theorem}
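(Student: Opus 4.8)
The plan is to bound the number of nodes in the recursion tree of \texttt{FastBB} and multiply it by the polynomial cost paid at each node. I would take the measure of a branch $B=(S,C,D)$ to be $|C|$, the number of undetermined vertices. Along every branch at least one vertex leaves $C$ (into $S$ or $D$), so $|C|$ strictly decreases, the recursion depth is at most $|V|$, and at the root $|C|=|V|$. The per-node work (testing whether $G[S\cup C]$ is a $k$-biplex, checking the four pruning conditions, selecting the pivot $\hat v$ and building the ordering) is polynomial and I would verify it is $O(|V|\cdot d)$. It therefore suffices to prove that the leaf count $T(c)$ over a candidate set of size $c$ satisfies $T(c)=O(\gamma_k^{\,c})$; absorbing the depth and the per-node cost into an $O(|V|\cdot d)$ factor then gives the stated $O(|V|\cdot d\cdot \gamma_k^{|V|})$ bound of Theorem~\ref{lemma:time_complexity}.

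Next I would write down the branching recurrences induced by Sym-BK branching, one per pivot case. In Case~1 ($\hat v\in S_L$, Eq.~(\ref{equation:sym-bk-case1})) only the first $a+1$ branches survive: branch $B_i$ with $1\le i\le a$ removes $i$ vertices from $C$, while the last branch $B_{a+1}$ removes $b$ vertices, since once $\hat v$ attains its budget of $k$ disconnections all of $\overline{\Gamma}(\hat v,C_R)$ can be purged from $C$; this yields $T(c)\le \sum_{i=1}^{a}T(c-i)+T(c-b)$. In Case~2 ($\hat v\in C_L$, Eq.~(\ref{equation:sym-bk-case2})) the first $a+2$ branches survive: the exclude-branch $B_1$ drops $1$ vertex, branches $B_2,\dots,B_{a+1}$ drop $2,\dots,a+1$, and the last branch $B_{a+2}$ drops $b+1$, giving $T(c)\le \sum_{i=1}^{a+1}T(c-i)+T(c-b-1)$. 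Using $0\le a\le k$ and $b\ge a+1$, I would check that the branching number of Case~1 is always dominated by that of Case~2, confirming that Case~2 is the bottleneck (for $k=1$ the raw Case~1 rate is the golden ratio $1.618$, below $\gamma_1$).

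The crux, and the step I expect to be hardest, is to tighten the raw Case~2 recurrence (whose rate is a tribonacci-type number, e.g. $1.839$ for $k=1$) down to the characteristic equation $x^{k+4}-2x^{k+3}+x^2-x+1=0$ by exploiting the rule that prioritises a pivot in $S$. The mechanism is a two-level (compound) branching analysis: whenever Case~2 fires, each include-branch places $\hat v$ into $S$ with a controlled number of residual disconnections, so the branching applied to such a child is forced to be a cheaper Case~1 branching with $\hat v$ as a pivot drawn from $S$ (here the residual parameters become $a''=a-i+2$ and $b''=b-i+1$). Unfolding this extra level for the include-children and re-expressing the resulting compound branching vector is what lowers the rate from the tribonacci value to $\gamma_k$. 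I would then set $T(c)=\gamma_k^{\,c}$ and substitute into the compound recurrence, verifying that $\gamma_k$ being a root of $x^{k+4}-2x^{k+3}+x^2-x+1$ makes the inequality hold for every admissible pair $(a,b)$, and that the extremal pair realising the characteristic polynomial is indeed the worst case. The two genuine difficulties are (i) justifying the forced Case~1 follow-up precisely in the tight regime $b=a+1$, where $\hat v$ becomes saturated at exactly $k$ disconnections, and (ii) a finite optimisation showing no other $(a,b)$ yields a branching number exceeding $\gamma_k$.

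Finally I would close by confirming that $\gamma_k$ is well defined and strictly below $2$. The polynomial $x^{k+4}-2x^{k+3}+x^2-x+1$ has a unique largest positive real root; evaluating it at $x=2$ gives $2^{k+4}-2^{k+4}+4-2+1=3>0$, and since the last surviving branch always reduces the measure by at least $b\ge 2>1$ the branching number cannot reach the brute-force value $2$, so $\gamma_k<2$. A short numerical check then recovers $\gamma_1=1.754$, $\gamma_2=1.888$, and $\gamma_3=1.947$, completing the proof.
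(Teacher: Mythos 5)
Your setup matches the paper's: the measure $|C|$, the $O(|V|\cdot d)$ per-node cost, the Case~1 recurrence $T(c)\le\sum_{i=1}^{a}T(c-i)+T(c-b)$ and the Case~2 recurrence $T(c)\le\sum_{i=1}^{a+1}T(c-i)+T(c-b-1)$, and the observation that the entire burden lies in tightening Case~2 when $b=a+1$ (when $b\ge a+2$ the raw recurrence with $(a,b)=(k,k+2)$ already yields the stated characteristic polynomial $x^{k+4}-2x^{k+3}+x^2-x+1=0$). The gap is that the mechanism you propose for that tightening cannot fire in exactly the regime where it is needed. In branch $B_i$ ($i\ge 2$) of Case~2, the pivot $\hat v$ enters $S$ with $\overline{\delta}(\hat v,S_R)$ increased by $i-2$, while $u_{i-1}$ leaves $S\cup C$ altogether, so $\hat v$'s disconnection count within the new $S\cup C$ is $(k-a)+b-1$. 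When $b=a+1$ this equals $k$ exactly --- equivalently, your own residual parameters satisfy $a''=b''=a-i+2$, violating the requirement $a<b$ for a branching pivot --- so $\hat v$ is no longer a candidate pivot and there is no forced Case~1 follow-up on $\hat v$ to unfold. Your two-level compound analysis therefore only applies when $b\ge a+2$, where no improvement is needed, and contributes nothing when $b=a+1$, where the raw rate is the tribonacci-type root of $x^{k+3}-2x^{k+2}+1=0$ (e.g.\ $1.839>\gamma_1=1.754$ for $k=1$). You flag this saturation as difficulty (i), but it is not a technicality to be patched: the second level you rely on simply does not exist there.

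The paper closes this hole with a different idea. Since the pivot is taken from $C$ only when $S$ has no candidate pivot, and is chosen with the most disconnections, in the $b=a+1$ sub-case every vertex of $S\cup C$ has at most $k+1$ disconnections, i.e., $G[S\cup C]$ is a $(k+1)$-biplex. The detailed proof then replaces $|C|$ by a weighted measure that gives ``boundary'' vertices (exactly $k+1$ disconnections) a larger weight than the rest: excluding $u_{i-1}$ forces every boundary vertex non-adjacent to it to drop to $k$ disconnections and thus releases extra potential, and pruning condition (4) guarantees that in the worst case at least one such boundary vertex besides $\hat v$ exists. With a suitable choice of weights this amortized recurrence falls below the $(a,b)=(k,k+2)$ rate, so the overall complexity is governed by the root of $x^{k+4}-2x^{k+3}+x^2-x+1=0$. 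Without this amortization (or some substitute for it), your argument establishes only $O^*(\beta_k^{|V|})$ with $\beta_k$ the largest root of $x^{k+3}-2x^{k+2}+1=0$, which is precisely the weaker bound the paper attributes to \texttt{FPadp}.
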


{\roundA
\begin{proof}
We give a sketch of the proof and put the details
\ifx \CR\undefined
{\LC in the appendix.}
\else
{\LC in the technical report~\cite{TR}.}
\fi
We recursively maintain two arrays to record the degree of each vertex $v$ within $G[S]$ or $G[S\cup C]$, i.e., $\delta(v,S)$ or $\delta(v,S\cup C)$. Then, the recursion of \texttt{FastBB-Rec} runs in polynomial time $O(|V|\cdot d)$. Specifically, the time cost is dominated by the part of checking pruning condition (4) in line 6. This part has two steps, namely finding all those vertices with at least $k$ disconnections from $S\cup C$ in $O(|S\cup C|)$ time and checking the pruning condition (4) for each vertex in $D$ in $O(|D|\cdot d)$ time, where $|S\cup C|$ and $|D|$ are both bounded by $O(|V|)$.
%
%

Next, we analyze the number of recursions. Let $T(n)$ be the largest number of recursions where $n=|C|$. We have two cases.

\noindent\textbf{Case 1 ($\hat{v}\in S$).} We remove $i$ vertices from $C$ in $B_i$ ($1\leq i\leq a$) and $b$ vertices from $C$ in $B_{a+1}$ in the worst-case. Hence, we have
\begin{equation}
\label{eq:time1}
    T_1(n)\leq \sum_{i=1}^a T_1(n-i)+T_1(n-b).
\end{equation}
As discussed earlier, we have $a\leq k$ and $a<b$. It is easy to verify that we reach the maximum of $T_1(n)$ when $a=k$ and $b=k+1$. We thus have $T_1(n)\!\leq\! \sum_{i=1}^{k}T_1(n-i)\!+\!T_1(n-k-1)$. By solving this linear recurrence, the worst-case running time is $O(|v|\cdot d\cdot \gamma_k^{|V|})$ where $\gamma_k$ is the largest positive real root of $x^{k+2}-2x^{k+1}+1=0$. For example, $\gamma_k=1.618,1.839$ and $1.928$ when $k=1,2$ and 3, respectively.

\noindent\textbf{Case 2 ($\hat{v}\in C$).} We remove $i$ vertices from $C$ in $B_i$ ($1\leq i\leq a+1$) and $b+1$ vertices from $C$ in $B_{a+2}$ in the worst-case. Hence, we have
\begin{equation}
    T_2(n)\leq \sum_{i=1}^{a+1} T_2(n-i)+T_2(n-b-1).
\end{equation}
Assume $\hat{v}\in C_L$, we consider two scenarios, i.e.,  $\overline{\delta}(\hat{v},S_R\cup C_R)\geq k+2$ and $\overline{\delta}(\hat{v},S_R\cup C_R)=k+1$. 

\begin{itemize}[leftmargin=*]
    \item For Scenario 1, we can imply $a\leq k$ and $b>a+1$. When $a=k$ and $b=k+2$, $T_2(n)$ reaches the maximum. Hence, the worst-case running time is $O(|V|\cdot d\cdot \gamma_k^{|V|})$ where $\gamma_k$ is the largest positive real root of $x^{k+4}-2x^{k+3}+x^2-x+1=0$. For example, when $k=1$, 2 and 3, we have $\gamma_k=1.754$, 1.888 and 1.947, respectively.
    \item For Scenario 2, we can imply $a\leq k$ and $b>a$. When $a=k$ and $b=k+1$, $T_2(n)$ reaches the maximum. Thus the worst-case running time is $O(|V|\cdot d\cdot \gamma_k^{|V|})$ where $\gamma_k$ is the largest positive real root of $x^{k+3}-2x^{k+2}+1=0$. For example, when $k=1$, 2 and 3, we have $\gamma_k=1.839$, 1.928 and 1.966, respectively. 
\end{itemize}
The idea of remaining proof is to show that the analysis of Scenario 2 can be further improved based on our pivot selection strategy.
{\LC Specifically,}
Scenario 2 would have the worst-case time complexity smaller than {\LC that of} Scenario 1, and thus the worst-case time complexity of \texttt{FastBB} would be bounded by Scenario 1.
\end{proof}
}

{\revision
\noindent\textbf{Space complexity}. The space complexity of \texttt{FastBB} is $O(K\cdot (d+k)+k\cdot|V|)$, which is \emph{linear} to the size of the input bipartite graph given parameters $K$ and $k$. 
The first term $O(K\cdot (d+k))$ is the space cost of storing the returned top-$K$ MaxBPs (each has the size bounded by $2\times(d+k)$ since otherwise it involves at least $2\times(d+k)+1$ vertices and would not be a $k$-biplex due to the fact that (1) it has the size of one side at least $d+k+1$ and (2) there would be a vertex from the other side which disconnects at least $k+1$ vertices since $d$ is the maximum degree of vertex in $G$).
The second term $O(k\cdot|V|)$ is the space cost of the branch-and-bound search process.
Specifically, the space cost of the search process is dominated by that of storing the vertex orderings (Equation~(\ref{equation:sym-bk-case1}) or (\ref{equation:sym-bk-case2})) in all recursions. 
To reduce the space cost, instead of storing the ordering of all vertices in $C$, we store that of only those vertices that are before others in the ordering, i.e., the first $a+1$ (resp. $a+2$) vertices of Equation~(\ref{equation:sym-bk-case1}) (resp. (\ref{equation:sym-bk-case2})) depending on the cases.
This is because it creates the first $a+1$ or $a+2$ branches only. 
Considering $a\leq k$, we know that the space cost of storing the vertex ordering in one recursion is bounded by $O(k)$. 
Furthermore, it can be verified that the number of recursions is bounded by $O(|V|)$ (since whenever the number of recursions increases by one, at least one vertex is removed from $C$ and $C\subseteq V$). 
In summary, the overall space cost of storing the ordering of vertices is $O(k\cdot |V|)$.
%
}

{\revision
\smallskip
\noindent\textbf{Remark 1}. 
The baseline method \texttt{FPadp}, which adapts the \texttt{FaPlexen} method in~\cite{DBLP:conf/aaai/ZhouXGXJ20}, has the worst-case time complexity $O(|V|\cdot d\cdot \beta_k^{|V|})$,
where $\beta_k$ is the largest positive real root of $x^{k+3}-2x^{k+2}+1=0$.
We compare between \texttt{FPadp} and \texttt{FastBB}'s time complexities: (1) the polynomial in both complexities (i.e., the time cost for each recursion) is $O(|V|\cdot d)$ since 
both algorithms are branch-and-bound based
and (2) the exponential factors in \texttt{FastBB}'s time complexity are strictly smaller than those of \texttt{FPadp}, e.g., $\gamma_1=1.754$ vs. $\beta_1=1.839$, $\gamma_2=1.888$ vs. $\beta_2=1.928$ and $\gamma_3=1.947$ vs. $\beta_3=1.966$ for $k=1$, 2, and 3, respectively. 
The differences in the exponential factors are due to the differences in the strategies of branching and ordering vertices (i.e., \texttt{FastBB} uses the Sym-BK branching while \texttt{FPadp} uses a hybrid branching; \texttt{FastBB} prioritizes $S$ over $C$ when selecting a pivot for deciding the ordering of vertices while \texttt{FPadp} does not).
}


{\revision
\smallskip
\noindent\textbf{Remark 2}. We notice that the delay measure, i.e., the longest running time between any two consecutive outputs, is often used for evaluating enumeration algorithms~\cite{yu2022kbiplex,zhang2014finding}. In our problem setting, all solutions can only be confirmed and returned in one shot at the end due to its optimization nature. For such optimization-oriented settings~\cite{chen2021efficient,ignatov2018mixed,zhou2018towards,lyu2020maximum}, the delay measure is not suitable since it would be equal to the total running time trivially. 
}

\if 0
\begin{proof}
Specifically, since each recursion runs in polynomial time, the worst-case running time is dominated by the number of recursions. To upper bound it, let $T(n)$ be the maximum number of recursions where $n=|C|$. In general, we have following two cases.

\noindent\textbf{Case 1 ($\hat{v}\in S$).} We remove $i$ vertices from $C$ in $B_i$ ($1\leq i\leq a$) and $b$ vertices from $C$ in $B_{a+1}$ (note that no vertex can be removed by line 14 in the worst case). Hence, we have
\begin{equation}
\label{eq:time1}
    T_1(n)\leq \sum_{i=1}^a T_1(n-i)+T_1(n-b).
\end{equation}
Based on our discussion before, we have $a\leq k$ and $a<b$. It is easy to verify that we reach the maximum of $T_1(n)$ when $a=k$ and $b=k+1$. We thus have $T_1(n)\!\leq\! \sum_{i=1}^{k}T_1(n-i)\!+\!T_1(n-k-1)$. By solving this linear recurrence, the worst-case running time is $O(|V|\cdot \cdot \gamma_k^{|V|})$ where $\gamma_k$ is the largest positive real root of $x^{k+2}-2x^{k+1}+1=0$. For example, $\gamma_k=1.618,1.839$ and $1.928$ when $k=1,2$ and 3, respectively.

\noindent\textbf{Case 2 ($\hat{v}\in C$).} We remove $i$ vertices from $C$ in $B_i$ ($1\leq i\leq a+1$) and $b+1$ vertices from $C$ in $B_{a+2}$. In the worst case, Lemma~\ref{lemma:dense_case} and Lemma~\ref{lemma:sparse_case} for $B_1$ fails (i.e., $I=\emptyset$). Hence, we have
\begin{equation}
    T_2(n)\leq \sum_{i=1}^{a+1} T_2(n-i)+T_2(n-b-1).
\end{equation}
W.l.o.g., assume $\hat{v}\in C_L$. Based on the above recurrence, we consider two scenarios, i.e.,  $\overline{\delta}(\hat{v},S_R\cup C_R)\geq k+2$ and $\overline{\delta}(\hat{v},S_R\cup C_R)=k+1$. 

\begin{itemize}[leftmargin=*]
    \item For Scenario 1, we can imply $a\leq k$ and $b>a+1$. When $a=k$ and $b=k+2$, $T_2(n)$ reaches the maximum. Hence, the worst-case running time is $O(|V|d\gamma_k^n)$ where $\gamma_k$ is the largest positive real root of $x^{k+4}-2x^{k+3}+x^2-x+1=0$. For example, when $k=1$, 2 and 3, we have $\gamma_k=1.754$, 1.888 and 1.947, respectively.
    \item For Scenario 2, we can imply $a\leq k$ and $b>a$. When $a=k$ and $b=k+1$, $T_2(n)$ reaches the maximum. Thus the worst-case running time is $O(|V|d\gamma_k^n)$ where $\gamma_k$ is the largest positive real root of $x^{k+3}-2x^{k+2}+1=0$. For example, when $k=1$, 2 and 3, we have $\gamma_k=1.839$, 1.928 and 1.966, respectively. 
\end{itemize}

Clearly, the second scenario has the larger worst-case running time. We note that the analysis totally ignores the effects of Lemma~\ref{lemma:dense_case} and Lemma~\ref{lemma:sparse_case}. To improve, we first have the following observation.
\begin{observation}
In the scenario where the pivot disconnects $k+1$ vertices at the other side of $S\cup C$, $G[S\cup C]$ is a $(k+1)$-biplex.
\end{observation}

This is correct since otherwise pivot $\hat{v}$ (that is selected from $C$ and disconnects the largest number of vertices) would disconnect more than $k+1$ vertices. Besides, it indicates that $G[S\cup C]$ would become a $k$-biplex and thus the branch can terminate if no vertex in $S\cup C$ disconnects $k+1$ vertices at the other side. Intuitively, those vertices with $k+1$ disconnections have higher effects on the analysis (e.g., the upper bound $T_2(n)$) while vertices with no more than $k$ disconnections is of less importance. Motivated by this, we propose a novel upper bound function $T_3(\rho)$ for the maximum number of recursions, which assigns different weights to the vertices in $C$:
\begin{equation}
T_3(\rho),\  \rho=n_{\leq k}\times w_1+n_{k+1}\times w_2,\ 0<w_1<w_2\leq 1,   
\end{equation}
where $n_{\leq k}$ (resp. $n_{k+1}$) is the number of vertices with no more than $k$ (resp. exactly $k+1$) disconnections in $C$. We remark that the previous upper bound function $T_2(n)$ assigns the equal weight, i.e., 1, to all vertices in $C$. 

We denote a vertex with exactly $k+1$ disconnections by a \emph{boundary} vertex. W.l.o.g., assume that $\hat{v}\in C_L$ and there are $x$ boundary vertices in $\overline{\Gamma}(\hat{v},C_R)$. We derive $x\geq 1$ in the worst-case (since otherwise Lemma \ref{lemma:dense_case} is triggered). W.l.o.g., assume that the first $k+1-x$ (resp. the last $x$) vertices in $\overline{\Gamma}(\hat{v},C_R)$ has no more than $k$ (resp. exactly $k+1$) disconnections. Consider the worst case where $a=k$ and $b=k+1$, we analyze $k+2$ branches by $T_3(\rho)$ below.

\begin{lemma}
For the maximum number of recursions $T_3(\rho)$ in Scenario 2, we have $T_3(\rho)\leq T_3(\rho-(w_2-w_1)x-w_2)+\sum_{i=1}^{k+1-x}T_3(\rho-w_{1}i-(2w_2-w_1))+\sum_{i=1}^{x}T_3(\rho-w_1(k+1-x)-w_2i-(2w_2-w_1))$.
\end{lemma}
\begin{proof}
For $B_1$, we exclude only a boundary vertex $\hat{v}$ in the worst case. After branching, those $x$ boundary vertices in $\overline{\Gamma}(\hat{v},C_R)$ will disconnect $k$ vertices from $S_L\cup C_L\backslash\{\hat{v}\}$ (since they all disconnect $\hat{v}$). Therefore, $B_1$ has $T_3(\rho-(w_2-w_1)x-w_2)$ recursions.

For the remaining branches, i.e., $\{B_2,B_3,...,B_{k+2}\}$, recall that $B_i$ includes vertex $\hat{v}$ and excludes vertex $u_{i-1}$. Hence, for each branch $B_i$, we have (1) a boundary vertex $\hat{v}$ is removed, (2) except $\hat{v}$, there exists at least a boundary vertex disconnected to $u_{i-1}$ (since otherwise $I$ includes $u_{i-1}$ based on Lemma \ref{lemma:sparse_case}) and it would disconnect to $k$ vertices after excluding $u_{i-1}$. This would yield $-(2w_2-w_1)$ in total. Besides, for each branch $B_i$ in $\{B_2,...,B_{k+2-x}\}$, it removes $i-1$ vertices, i.e., $\{u_1,...,u_{i-1}\}$, from $C$ to $S$, each of which has the weight $w_1$. For each branch $B_i$ in $\{B_{k+3-x},...,B_{k+2}\}$, it removes $k+2-x$ vertices with weight $w_1$ and $i-(k+2-x)$ vertices with weight $w_2$ from $C$ to $S$.
\end{proof}

It is easy to verify that we reach the maximum of $T_3(\rho)$ when $x=1$. In the following analysis, we set $w_1=0.36$ and $w_2=1$ since (1) the best choice of $w_1$ and $w_2$ is unknown and (2) finding the best choice is challenging and far too enough (for example, if we can find a setting of $w_1$ and $w_2$ to ensure that the second scenario performs better than the first scenario, the worst-case running time will be bounded by the first scenario).
Hence, the worst-case running time is $O(\gamma_k^{\rho})$ where $\rho\leq |V|$ and $\gamma_k$ is the largest positive real root of $x^{0.36k+3.64}-x^{0.36k+2.64}-x^{0.36k+2}-x^{0.36k+1.64}+x^{0.36k+1.28}+x^{0.36k+1}+x^{1.36}+1=0$. For example, when $k=1$, 2 and 3, we have $\gamma_k=1.753$, 1.866 and 1.945. Hence, Scenario 2 performs better than Scenario 1.
%

Note that in the worst case, we observe that (1) Case 1 runs faster than Case 2  and (2) Scenario 1 with larger value of $b-a$ (i.e., 2) performs better than Scenario 2 with a smaller one (i.e., 1), which motivates the design of our pivot selection.
\end{proof}
\fi

\section{Efficiency and Scalability Boosting Techniques}
\label{sec:framework}

In this section, we further introduce three techniques for boosting the efficiency and scalability of the branch-and-bound (\texttt{BB}) algorithms introduced in Section~\ref{sec:alg}, namely \emph{progressive bounding} (\texttt{PB}) in Section~\ref{subsec:pb}, \emph{inclusion-exclusion} (\texttt{IE}) in Section~\ref{subsec:ie}, and \texttt{PBIE}, which combines \texttt{PB} and \texttt{IE}, in Section~\ref{subsec:pbie}. 
%


\subsection{Progressive Bounding Framework: \texttt{PB}}
\label{subsec:pb}
{\ChengSIGMOD{\texttt{PB} is adapted from an existing study of finding the biclique with the maximum number of edges~\cite{lyu2020maximum}.}}
The major idea of \texttt{PB} is to run a \texttt{BB} algorithm multiple times, and for each time, it imposes appropriate constraints on the MBP to be found, including lower and upper bounds of the number of vertices on both the left and right sides of the MBP. Then, it returns the MBP with the most edges found at different times. With the constraints captured by the lower and upper bounds, there are two benefits, namely (1) the \texttt{BB} algorithm can be run on a reduced graph instead of the original one and (2) the efficiency of the \texttt{BB} algorithm on the reduced graph can be further boosted with additional pruning techniques based on the upper bounds. 
Note that the pruning techniques are based on the lower bounds $\theta_L$ and $\theta_R$ only in the \texttt{BB} algorithms.

Let $H^*$ be the MaxBP with the maximum $E(H^*)$. We have the following prior knowledge about the number of vertices at the left and right sides of $H^*$, i.e., $|L(H^*)|$ and $|R(H^*)|$.
\begin{equation}
\theta_L \!\leq\! |L(H^*)| \!\leq\! \delta_{max}^R+k;~~~~
\theta_R \!\leq\! |R(H^*)| \!\leq\! \delta_{max}^L+k.
\end{equation}
where $\delta_{max}^R=\max_{u\in R}\!\delta(u, L)$ and $\delta_{max}^L=\max_{v\in L}\delta(v,R)$.
The lower bounds $\theta_L$ and $\theta_R$ are inherited from the problem definition. The upper bounds of $\delta_{max}^R+k$ and $\delta_{max}^L+k$ can be verified easily and the proofs for them are thus omitted.
We denote by $LB_L^i$, $UB_L^i$, $LB_R^i$, and $UB_R^i$ the lower bound and upper bound of the number of vertices at the left and right sides, respectively, which the \texttt{PB} would use to capture the constraints at the $i^{th}$ time. Then, \texttt{PB} would set these bounds \emph{progressively} as follows.
\begin{align}
LB_{L}^i &= \max \{LB_{L}^{i-1} / 2, \theta_L\};~~~~UB_{L}^i = LB_{L}^{i-1}; \label{equation:bounds-left}\\
LB_{R}^i &= \max \{ |E(H_{i-1}^*)| /UB_{L}^i, \theta_R \};~~~~UB_{R}^i = \delta_{max}^L+k;
\label{equation:bounds-right}
\end{align}
where $LB_{L}^0 = \delta_{max}^R+k$ and $H_{i-1}^*$ is the MBP found at the $(i-1)^{th}$ time and $H_0^*$ can be set as $G[\emptyset]$.
Essentially, it (1) splits the range of possible values of $|L(H^*)|$, namely $[\theta_L, \delta_{max}^R+k]$, into intervals with lengths decreasing \emph{logarithmically}, (2) uses the boundaries of the intervals as lower and upper bounds of $|L(H^*)|$, and (3) then uses the upper bound of $|L(H^*)|$ and the MBP with the most edges found so far to further tighten the lower bound of $|R(H^*)|$. Note that it would generate $O(\log (\delta_{max}^R+k))$ sets of constraints.

It would then run the \texttt{BB} algorithm for each set of constrains captured by $LB_L^i, UB_L^i, LB_R^i, UB_R^i$ in the order of $i = 1, 2, ...$. At the $i^{th}$ time, it utilizes the lower and upper bounds as follows. \underline{First}, it reduces the graph by computing the $(|LB_R^i - k|, |LB_L^i - k|)$-core of $G$ since according to~\cite{yu2021efficient,yu2022kbiplex}, any MBP with at least $|LB_L^i|$ vertices at the left side and $|LB_R^i|$ vertices at the right side must reside in the $(|LB_R^i - k|, |LB_L^i - k|)$-core of $G$. \underline{Second}, when it runs a \texttt{BB} algorithm, it prunes a branch $B = (S, C, D)$ if $|S_L| > |UB_L^i|$ or $|S_R| > UB_R^i$.

In summary, \texttt{PB} runs a \texttt{BB} algorithm multiple times, each time on a reduced graph. Hence, \texttt{PB} would boost the practical performance of a \texttt{BB} algorithm.

\subsection{Inclusion-Exclusion based Framework: \texttt{IE}}
\label{subsec:ie}

\texttt{IE} is adapted from the \emph{decomposition} technique, which has been widely used for enumerating and finding subgraph structures~\cite{wang2022listing,chen2021efficient,DBLP:conf/kdd/ConteMSGMV18,DBLP:conf/aaai/ZhouXGXJ20}.
The major idea of \texttt{IE} is to partition the graph into multiple ones (which may overlap) and run a \texttt{BB} algorithm on each of the subgraphs. Finally, it returns among the found MBPs the one with the most edges. Specifically, it partitions the graph $G$ to $|L|$ subgraphs, namely $G_i = (L_i, R_i, E_i)$ for $1\le i\le |L|$, as follows.
\begin{align}
L_i &= \Gamma_2(v_i,L) -  \{v_1,...,v_{i-1}\};\label{equation:subgraph-1}\\
R_i &= \bigcup_{v\in L_i}\Gamma(v,R);\label{equation:subgraph-2}\\
E_i &= \{ (v, u) | v\in L_i, u\in R_i, (v, u)\in E\}\label{equation:subgraph-3}
\end{align}
where $L = \{v_1, v_2, ..., v_{|L|}\}$ and $\Gamma_2(v_i,L)$ denotes the set of 2-hop neighbors of $v_i$ in $L$.
We note that the number of vertices in $G_i$, i.e., $|L_i| + |R_i|$, is bounded by $d^3$, where $d$ is maximum degree of the graph $G$.
We verify that the MaxBP $H^*$ must reside in one of the subgraphs formed as above (for which the proof could be found
\ifx \CR\undefined
{\LC in the appendix}
\else
{\LC in the technical report~\cite{TR}}
\fi
). Furthermore, the MBP found in $G_i$ would \emph{include} $v_i$ and \emph{exclude} $v_1, v_2, ..., v_{i-1}$.



Moreover, it prunes the following vertices from a subgraph $G_i$.
\begin{itemize}[leftmargin=*]
	\item $v\in L_i$ with $\delta(v, R_i)<\theta_R-k$ or $|\Gamma(v,R_i)\cap \Gamma(v_i,R)|<\theta_R-2k$;
	\item $u\in R_i$ with $\delta(u, L_i)<\theta_L-k$.
\end{itemize}
The correctness of pruning the vertices as above can be verified by contradiction with the size constraints based on $\theta_L$ and $\theta_R$ (the detailed proof can be found
\ifx \CR\undefined
{\LC in the appendix}
\else
{\LC in the technical report~\cite{TR}}
\fi
).

Finally, it runs the \texttt{BB} algorithm on each graph $G_i$ with some vertices pruned by starting from the branch $B_i = (S_i, C_i, D_i)$ with $S_i = \{v_i\}$, $D_i = \{v_1, v_2, ..., v_{i-1}\}$, and $C_i = V(G_i) - \{v_1, v_2, ..., v_i\}$. It then returns the MBP with the most edges among all MBPs found.

With the \texttt{IE} framework, the time complexity of a \texttt{BB} algorithm can be improved in certain cases. For example, it improves the time complexity of \texttt{FastBB} from {\revision $O(|V|\cdot d \cdot \gamma_k^{|V|})$} to $O(|L|\cdot d^3 \cdot \gamma_k^{d^3})$ for certain sparse graphs (e.g., those with $d^3 < |V|$).

\subsection{Combining \texttt{PB} and \texttt{IE}: \texttt{PBIE}}
\label{subsec:pbie}

We observe that the \texttt{PB} and \texttt{IE} frameworks can be naturally combined for our problem of finding the MaxBP. Specifically, we can first use \texttt{PB} to construct multiple reduced graphs $G_{i}$ with corresponding constraints of lower and upper bounds of the number of vertices at the left and right sides of a graph. Then, when for each reduced graph $G_i$, we further use \texttt{IE} to construct $|L(G_i)|$ subgraphs $G_i^{v_j}$ for $v_j \in L(G_i)$ with some vertices pruned if possible. Finally, we invoke a \texttt{BB} algorithm (e.g., \texttt{FastBB}) on each subgraph $G_i^{v_j}$ with the constraints and return the MBP with the most edges among all found MBPs. 
The pseudo-code of \texttt{PBIE} is presented in Algorithm~\ref{alg:Fa_BR}. 
%

\begin{algorithm}{}
\small
\caption{Combine \texttt{PB} and \texttt{IE}: \texttt{PBIE} (for \texttt{FastBB})}
\label{alg:Fa_BR}
\KwIn{A graph $G(L,R,E)$, $k$, $\theta_L$ and $\theta_R$}
\KwOut{The maximal $k$-biplex $H^*$ with the most edges}
$H^*_0\leftarrow \emptyset$; $LB_L^0\leftarrow \delta_{max}^R+k$; $i\leftarrow 1$\;
\While{true}{
    Set $LB_L^i$, $UB_L^i$, $LB_R^i$, $UB_R^i$ according to Equations~(\ref{equation:bounds-left}) and (\ref{equation:bounds-right})\;
    \If{$UB_L^i \leq \theta_L$}
    {
        \textbf{return} $H^*_{i-1}$;
    }
    %
    %
    Compute a reduced graph $G_{i}$ as the $(|LB_R^i - k|, |LB_L^i - k|)$-core of $G$\;
    %
    $H_{i}^*\leftarrow H_{i-1}^*$\;
    \For{$v_j\in L(G_{i})$}{
       %
        Construct a subgraph $G^{v_j}_{i}$ based on Equations~(\ref{equation:subgraph-1} - \ref{equation:subgraph-3}) with some vertices further pruned\;
        $H^*_{i}\!\leftarrow$ invoke a \texttt{FastBB} algorithm with $G_{i}^{v_j}$, $k$, the lower/upper bounds, and $H^*_{i}$\;
    }
    $i\leftarrow i+1$\;
}
\textbf{return} $H^*_{i-1}$\;
\end{algorithm}

{\roundA
\noindent\textbf{Time complexity.} 
The time cost is dominated by part of invoking \texttt{FastBB} (line 3-11).
There are at most $O(\log (\delta_{max}^R+k))$ iterations (line 3-11). For each iteration, it constructs at most $O(|L|)$ subgraphs.  
Hence, \texttt{FastBB} is invoked by at most $O(\log (\delta_{max}^R+k) \cdot |L|)$ times. 
Besides, the number of vertices in $G_{i+1}^{v_j}$ is bounded by $O(d^3)$.
Therefore, the time complexity of \texttt{PBIE} (when used for boosting \texttt{FastBB}) is $O(\log (\delta_{max}^R+k)\cdot |L| \cdot d^4\cdot \gamma_k^{d^3})$ where $\gamma_k$ is a real number strictly smaller than $2$ (refer to Theorem \ref{lemma:time_complexity} for details).
We remark that the large graphs in real applications are usually sparse and have $d$ far smaller than the total number of vertices.
}


\section{Experiments}
\label{sec:exp}
\begin{table*}[h]
	\centering
	\vspace{-0.10in}
	\scriptsize
	\small
	\caption {Real datasets}
	\label{tab:dataset}
 	\vspace{-0.15in}
	\begin{tabular}{c|c|r|r|r|c|c|c|c}
		\hline
		 \multicolumn{1}{c|}{{Dataset}}
		 & \multicolumn{1}{c|}{{Category}}
		 & \multicolumn{1}{c|}{$|L|$}
		 & \multicolumn{1}{c|}{$|R|$}
		 & \multicolumn{1}{c|}{$|E|$}
		 & \multicolumn{1}{c|}{\revision Density}
		 & \multicolumn{1}{c|}{\revision $d$}
		 & \multicolumn{1}{c|}{\revision $|V(H)|$}
		 & \multicolumn{1}{c}{\revision $|E(H)|$}
		\\
		\hline\hline
		Divorce & Feature  & 9  & 50  & 225 & \revision 7.62 & \revision 37 & \revision 23 & \revision 87  \\
		\hline
		Cities & Feature  & 46  & 55 & 1342 &\revision 26.4 &\revision 54 &\revision 43 &\revision 437 \\
		\hline
		Cfat & 	Biology  & 200  & 200  & 1537 &\revision 7.66 &\revision 16 &\revision 13 &\revision 41 \\
		\hline
		Opsahl & Authorship  & 2,865  & 4,558  & 16,910 &\revision 4.58 &\revision 116 &\revision 19 &\revision 87 \\
		\hline
		\hline
		\textbf{Writer} & Authorship  & 89,356  & 46,213 & 144,340 &\revision 2.12 &\revision 246 &\revision 38 &\revision 136  \\
		\hline
		YouTube & Affiliation  & 94,238  & 30,087  & 293,360 &\revision 4.72 &\revision 7,591 &\revision 319 &\revision 945  \\
		\hline
		\textbf{Location} & Feature  & 172,091  & 53,407  & 293,697 &\revision 2.60 &\revision 12,189 &\revision 287 &\revision 849 \\
		\hline
		Actors & Affiliation   & 127,823  & 383,640  & 1,470,404 &\revision 5.64 &\revision 646 &\revision 117 &\revision 341  \\
		\hline
		IMDB & Affiliation  & 303,617  & 896,302  & 3,782,463 &\revision 5.72 &\revision 1,590 &\revision 192 &\revision 564  \\
		\hline
		\textbf{DBLP} & Authorship  & 1,425,813  & 4,000,150 & 8,649,016 &\revision 3.18 &\revision 951 &\revision 127 &\revision 369 \\
		\hline
		Amazon & Rating   & 6,703,391  & 957,764  & 12,980,837 &\revision 3.28 &\revision 9,990 &\revision 2,255 &\revision 1,1248 \\
		\hline
	    \textbf{Google} & Hyperlink  & 17,091,929  & 3,108,141 & 14,693,125 &\revision 1.46 &\revision 2,318 &\revision 44 &\revision 120   \\
		\hline
	\end{tabular}
\end{table*}

\noindent\textbf{Datasets.} We use both real and synthetic datasets in our experiments. 
{\revision The real datasets and their statistics are summarized in Table \ref{tab:dataset} (http://konect.cc/) where the edge density of a bipartite graph $G(L\cup R, E)$ is defined as $2\times |E|/({|L|+ |R|})$ and $d$ denotes the maximum degree. 
Note that these real graphs are usually sparse overall yet with some dense subgraphs (e.g., the maximum degree at least 16 and up to 12,189) since they follow the power-low distribution.} 
The Erd{\"o}s-R{\'e}yni (ER) synthetic datasets are generated by first creating a certain number of vertices and then randomly adding a certain number of edges between pairs of vertices.
We set the number of vertices and edge density as 100k and 20 for synthetic datasets, respectively, by default.

\smallskip\noindent\textbf{Baselines.} We compare our algorithm \texttt{PBIE+FastBB} with three baselines, namely \texttt{iMBadp}~\cite{yu2021efficient}, \texttt{FPadp}~\cite{DBLP:conf/aaai/ZhouXGXJ20} and \texttt{iTradp}~\cite{yu2022kbiplex}. 
Specifically, \texttt{PBIE+FastBB} adopts the combined framework \texttt{PBIE} and employs the improved algorithm \texttt{FastBB} within the framework. \texttt{iMBadp} and \texttt{iTradp} correspond to the adaptions of existing algorithms designed for enumerating \emph{large MBPs}, namely \texttt{iMB}~\cite{yu2021efficient} and \texttt{iTraversal}~\cite{yu2022kbiplex}.
{\revision Specifically, \texttt{iMBadp} adopts the BK branching for each side of a bipartite graph and is equipped with the pruning techniques developed in this paper. 
\texttt{iTradp} follows a \emph{reverse search}~\cite{DBLP:journals/jcss/CohenKS08} method and cannot be equipped with the pruning techniques 
developed in Section~\ref{subsec:fastbb} (since they are not compatible with the reverse search framework).
\texttt{FPadp} corresponds to a branch-and-bound algorithm with the hybrid branching strategy~\cite{DBLP:conf/aaai/ZhouXGXJ20} and the pruning techniques developed in this paper. 
Furthermore, \texttt{FPadp} inherits \texttt{FaPlexen}~\cite{DBLP:conf/aaai/ZhouXGXJ20} and employs the \texttt{IE} framework.}

{\revision
%
We remark that all baseline methods are partial enumeration based methods, i.e., they do not enumerate all MBPs for finding the MaxBPs. This is because
(1) their original versions were designed to enumerate only those MBPs with sizes at least some threshold but not all MBPs (as specified in the introduction section) and (2) they are enhanced by the four pruning conditions introduced in this paper as much as possible for pruning MBPs that are not MaxBPs (as specified above).
}


\smallskip
\noindent\textbf{Settings.} All algorithms were written in C++ and run on a machine with a 2.66GHz CPU and 32GB main memory running CentOS. We set the time limit (INF) as 24 hours and use 4 representative datasets as default ones, i.e., Writer, Location, DBLP and Google, which cover various graph scales. We set both $\theta_L$ and $\theta_R$ as $2k+1$ and both $K$ and $k$ as 1 by default.
Our code, data and additional experimental results are available at https://github.com/KaiqiangYu/SIGMOD23-MaxBP.


\subsection{Comparison among algorithms}

\noindent\textbf{All datasets.} We compare all algorithms on various datasets and show the running time in Figure~\ref{fig:all_datasets}. {\revision Besides, we report the number of vertices and the number of edges, denoted by $V(H)$ and $|E(H)|$, respectively, within the returned MaxBP in Table~\ref{tab:dataset}. We have the following observations. First, the found MaxBPs are usually sufficiently large to be meaningful (e.g., with at least 41 and up to 11,248 edges). }
Second, \texttt{PBIE+FastBB} outperforms all other algorithms on all datasets. This is consistent with our theoretical analysis that the worst-case running time of \texttt{PBIE+FastBB} is smaller than that of other algorithms. 
Third, \texttt{PBIE+FastBB} can handle all datasets within INF while others cannot finish on large datasets, e.g., Amazon and Google, which demonstrates its scalability. This is mainly because the framework \texttt{PBIE} would quickly locate the MaxBP at several smaller subgraphs while dramatically pruning many unfruitful vertices so as to reduce the search space.

\begin{figure}[]
	\centering
 	\vspace{-0.10in}
	\includegraphics[width=0.98\linewidth]{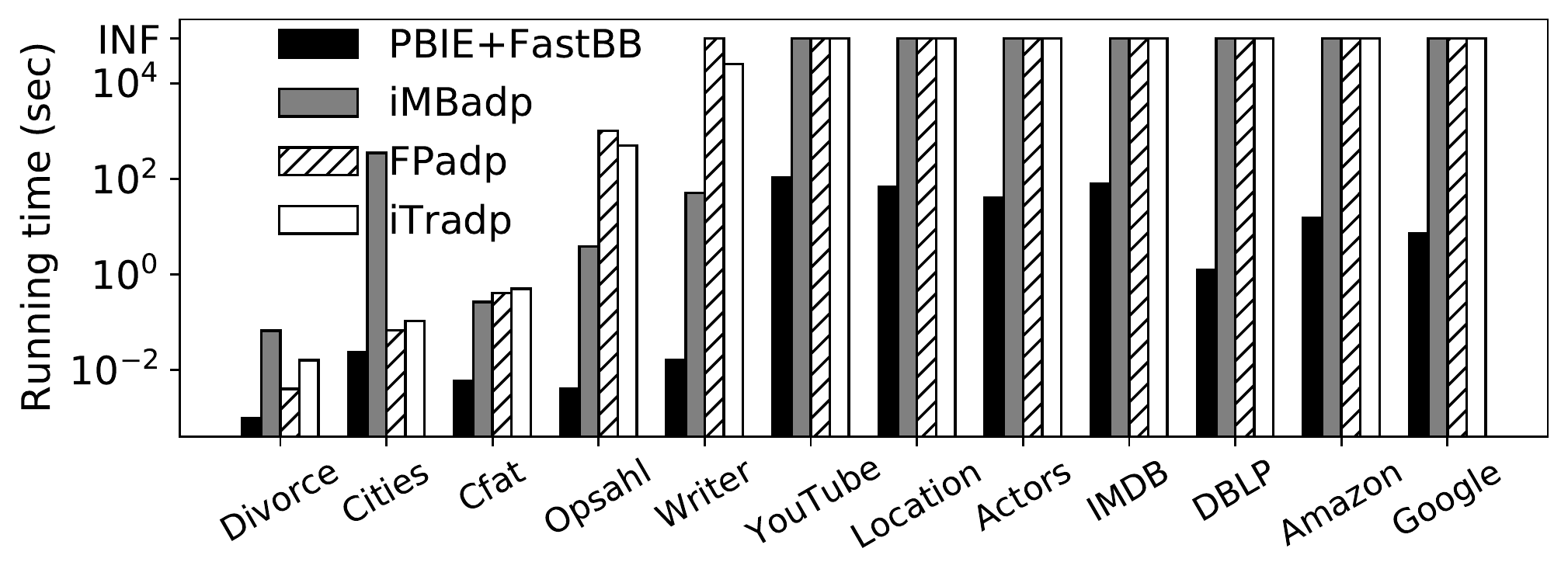}
	\vspace{-0.15in}
	\caption{Comparison on all real datasets ($k=1$)}
	\label{fig:all_datasets}
\end{figure}
\begin{figure}[]
	\centering
	\begin{tabular}{c c}
		
		\begin{minipage}{3.80cm}
			\includegraphics[width=4.1cm]{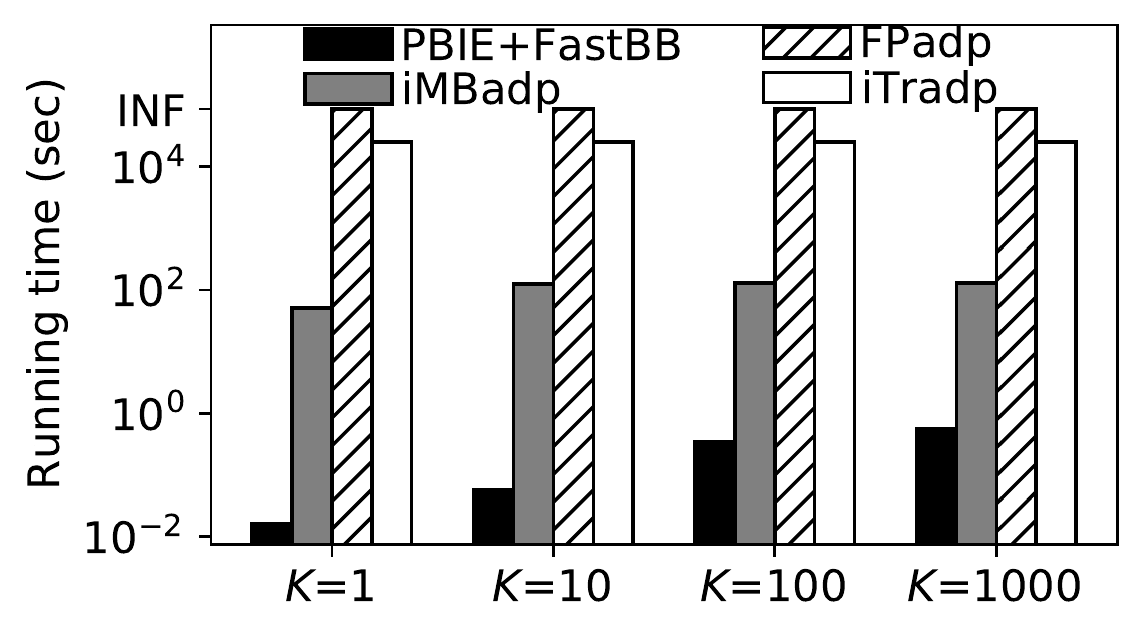}
		\end{minipage}
		&
		\begin{minipage}{3.80cm}
			\includegraphics[width=4.1cm]{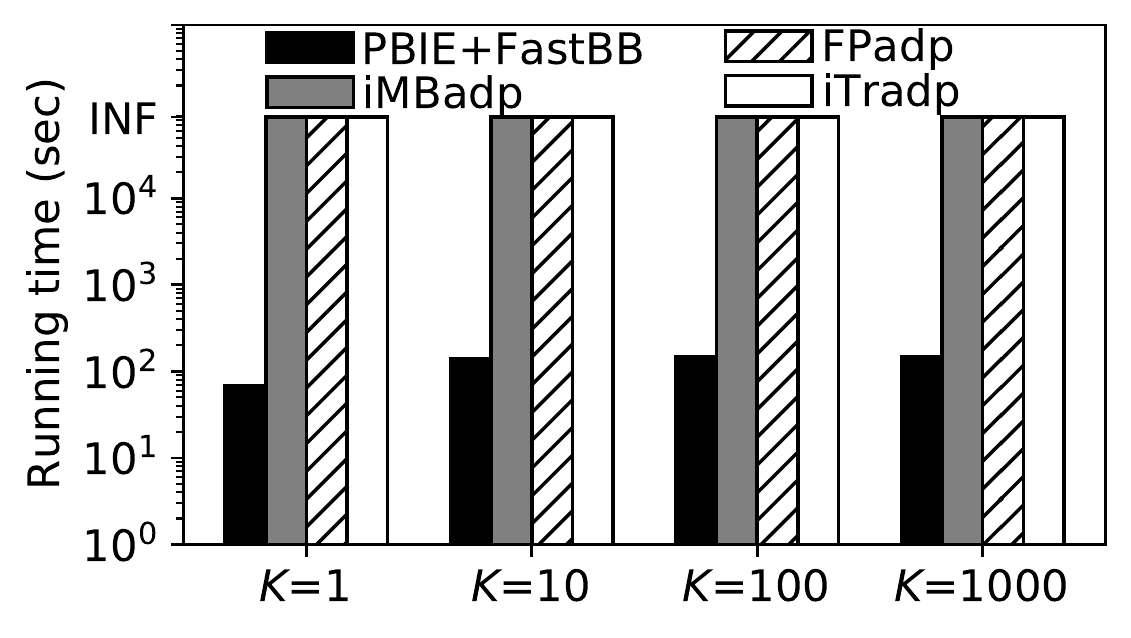}
		\end{minipage}		
		\\
		(a) Varying $K$ (Writer)
		&
		(b) Varying $K$ (Location)
		\\
		\begin{minipage}{3.80cm}
			\includegraphics[width=4.1cm]{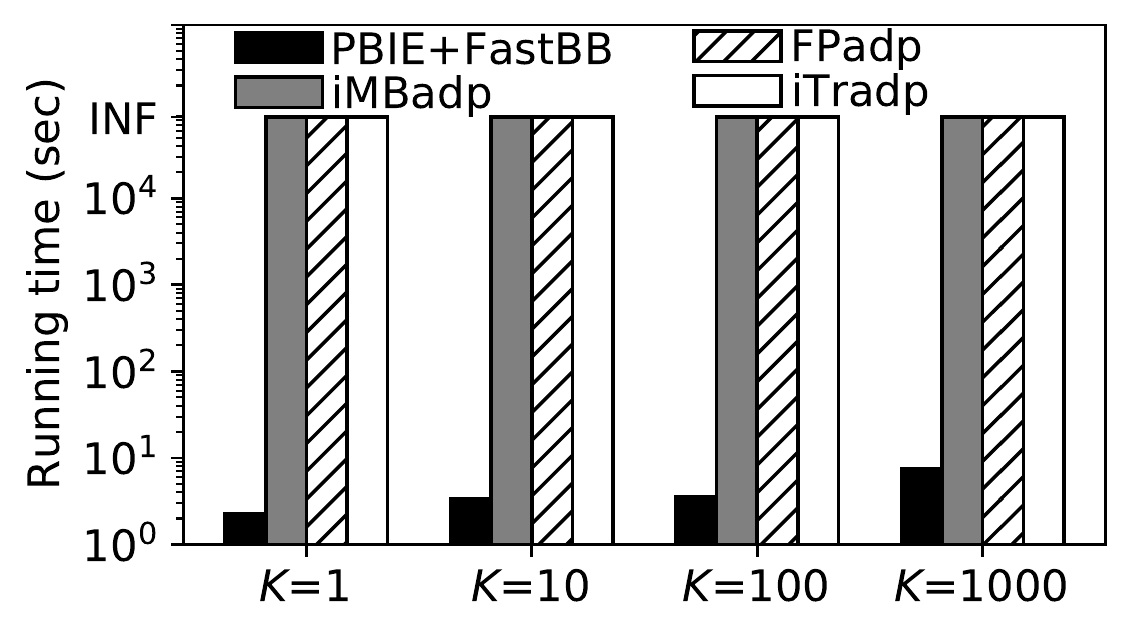}
		\end{minipage}
		&
		\begin{minipage}{3.80cm}
			\includegraphics[width=4.1cm]{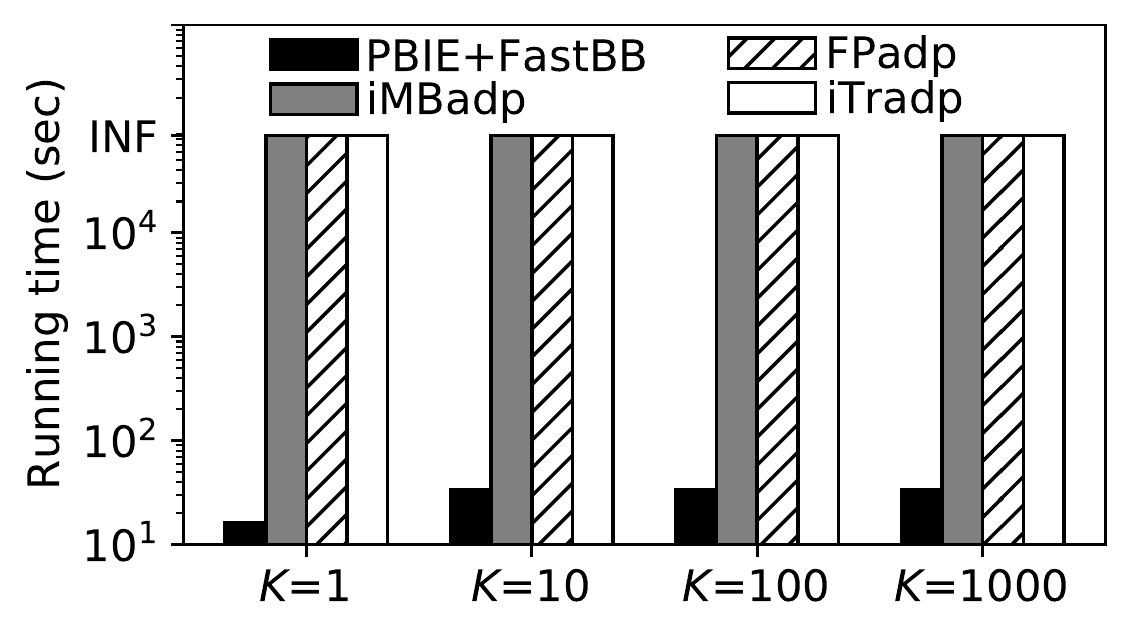}
		\end{minipage}		
		\\
		(c) Varying $K$ (DBLP)
		&
		(d) Varying $K$ (Google) 
	\end{tabular}
	\vspace{-0.15in}
	\caption{Comparison by varying $K$ ($k=1$)}
	\label{fig:vary_K}
\end{figure}

\smallskip
\noindent\textbf{Varying $K$.} The results of finding {\roundA $K$ MaxBPs} are shown in Figure~\ref{fig:vary_K}. \texttt{PBIE+FastBB} outperforms other algorithms by around 2-5 orders of magnitude. {\roundA Besides, it has the running time clearly rise as $K$ increases compared to other algorithms. Possible reasons include (1) \texttt{FastBB} would explore more MBPs to find the $K$ MaxBPs as $K$ grows and (2) \texttt{PBIE} becomes less effective as $K$ grows. In addition, \texttt{iTradp} has the running time almost not changed with $K$ since it needs to explore almost all MBPs without any powerful pruning.} 

\smallskip
\noindent\textbf{Varying $k$.} The results are shown in Figure~\ref{fig:vary_k}. \texttt{PBIE+FastBB} significantly outperforms other algorithms by up to five orders of magnitude. In addition, it has the running time first increase and then decrease as $k$ grows. Possible reasons include: (1) the number of $k$-biplexes increases exponentially with $k$, which causes the increase of the running time; (2) the thresholds $\theta_L$ and $\theta_R$ (i.e., $2k+1$ by default) increase with $k$ and correspondingly the pruning techniques that are based on $\theta_L$'s and $\theta_R$ become more effective. 

\begin{figure}[]
	\centering
	\begin{tabular}{c c}
		
		\begin{minipage}{3.80cm}
			\includegraphics[width=4.1cm]{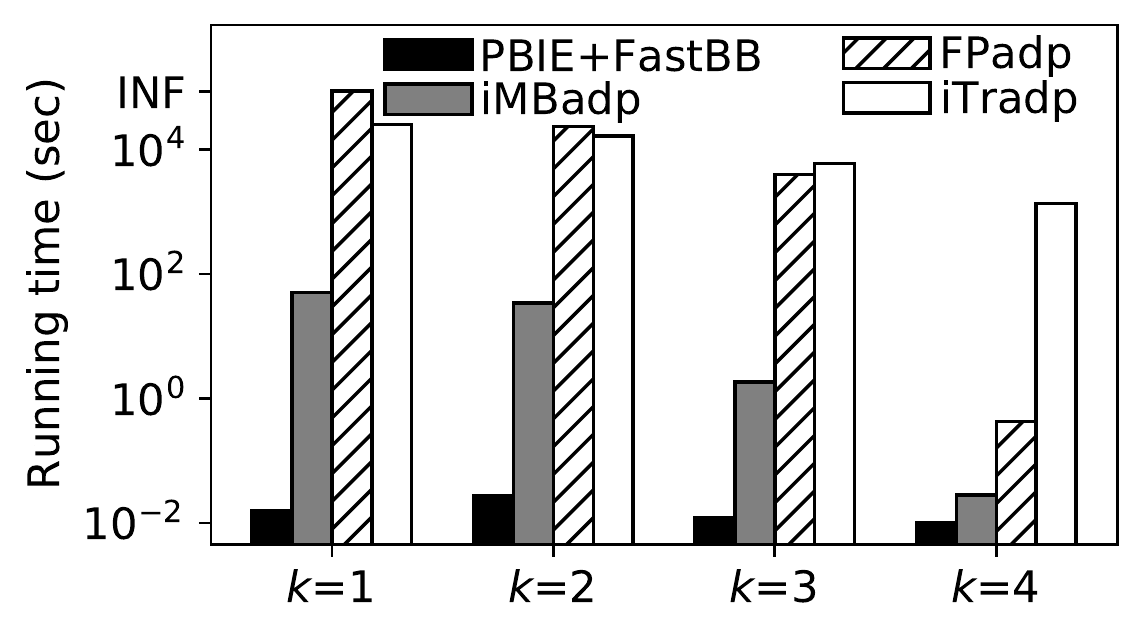}
		\end{minipage}
		&
		\begin{minipage}{3.80cm}
			\includegraphics[width=4.1cm]{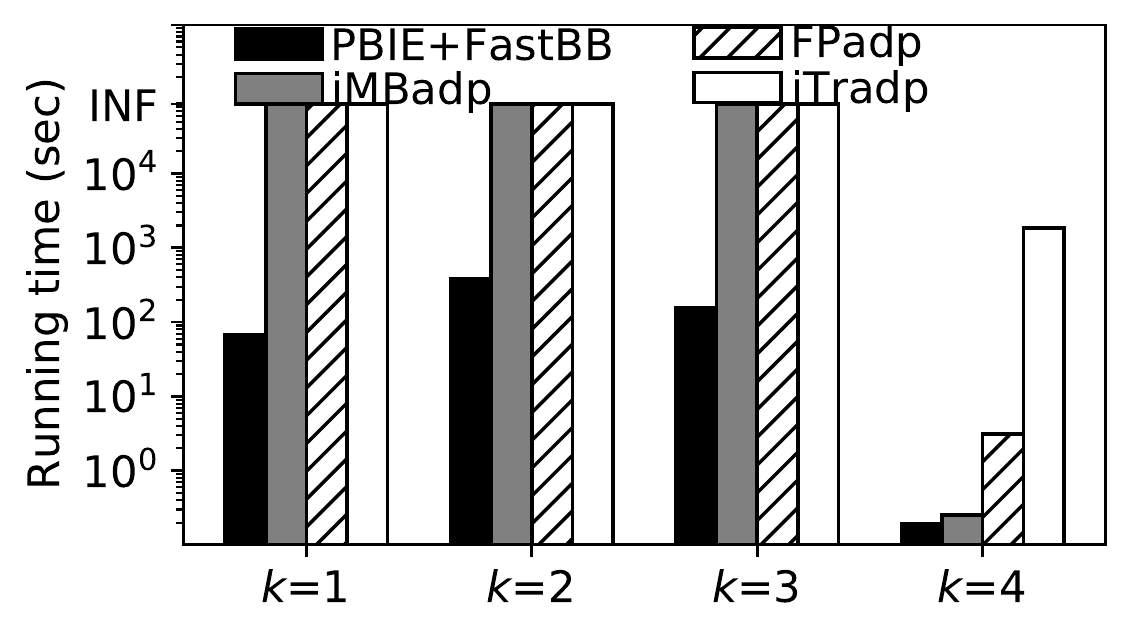}
		\end{minipage}		
		\\
		(a) Varying $k$ (Writer)
		&
		(b) Varying $k$ (Location)
		\\
		\begin{minipage}{3.80cm}
			\includegraphics[width=4.1cm]{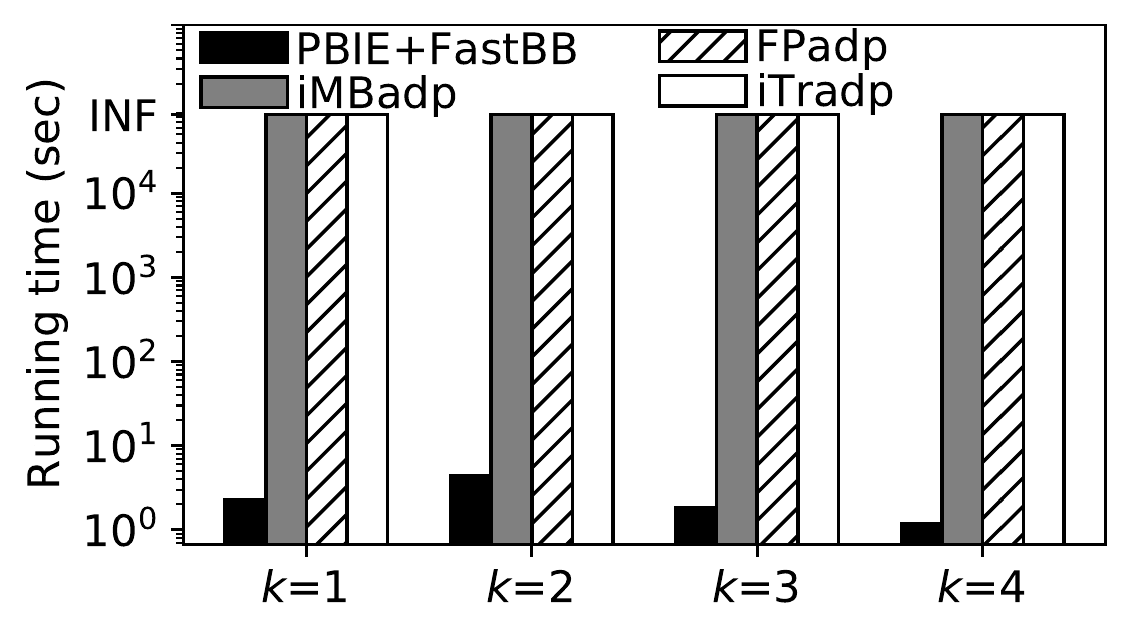}
		\end{minipage}
		&
		\begin{minipage}{3.80cm}
			\includegraphics[width=4.1cm]{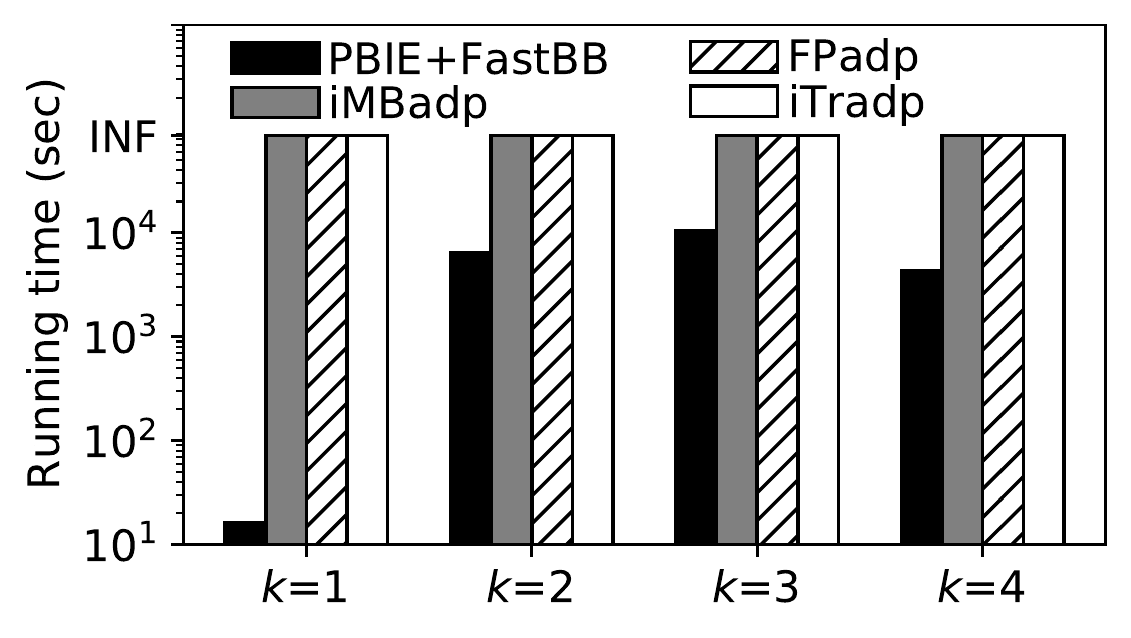}
		\end{minipage}		
		\\
		(c) Varying $k$ (DBLP)
		&
		(d) Varying $k$ (Google) 
	\end{tabular}
	\vspace{-0.15in}
	\caption{Comparison by varying $k$}
	\label{fig:vary_k}
\end{figure}

\smallskip
\noindent\textbf{Varying $\theta_L$ and $\theta_R$ thresholds.} 
The results are shown in Figure~\ref{fig:vary_theta}. \texttt{PBIE+FastBB} outperforms all other algorithms by achieving up to around 1000$\times$ speedup. Besides, the running time of all algorithms decreases as $\theta_L$ and $\theta_R$ grow. The reason is two-fold: (1) the search space (e.g., the number of large $k$-biplexes with the size of each side at least $\theta_L$ and $\theta_R$) decreases as $\theta_L$ and $\theta_R$ grow and (2) the pruning rules are more effective for larger $\theta_L$'s and $\theta_R$'s. 
\begin{figure}[]
	\centering
	\begin{tabular}{c c}
		
		\begin{minipage}{3.80cm}
			\includegraphics[width=4.1cm]{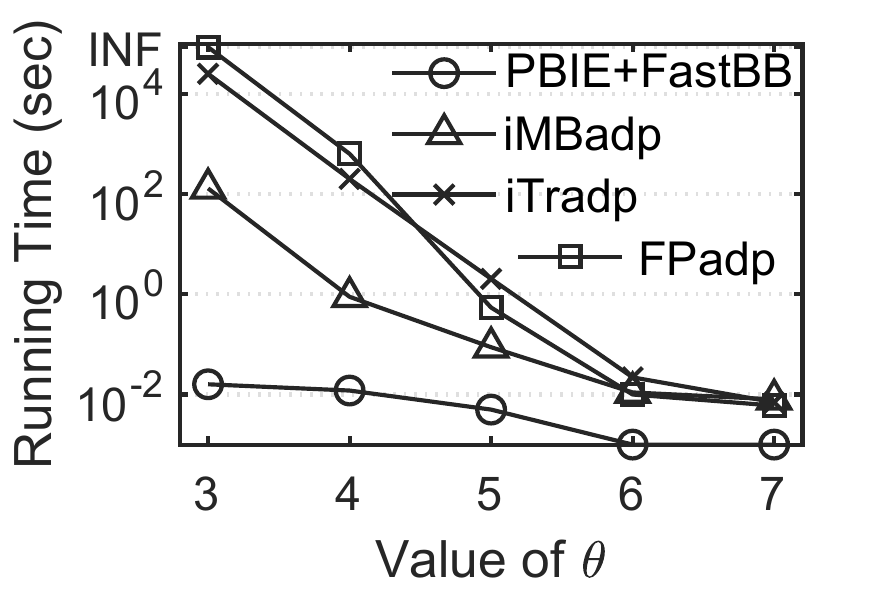}
		\end{minipage}
		&
		\begin{minipage}{3.80cm}
			\includegraphics[width=4.1cm]{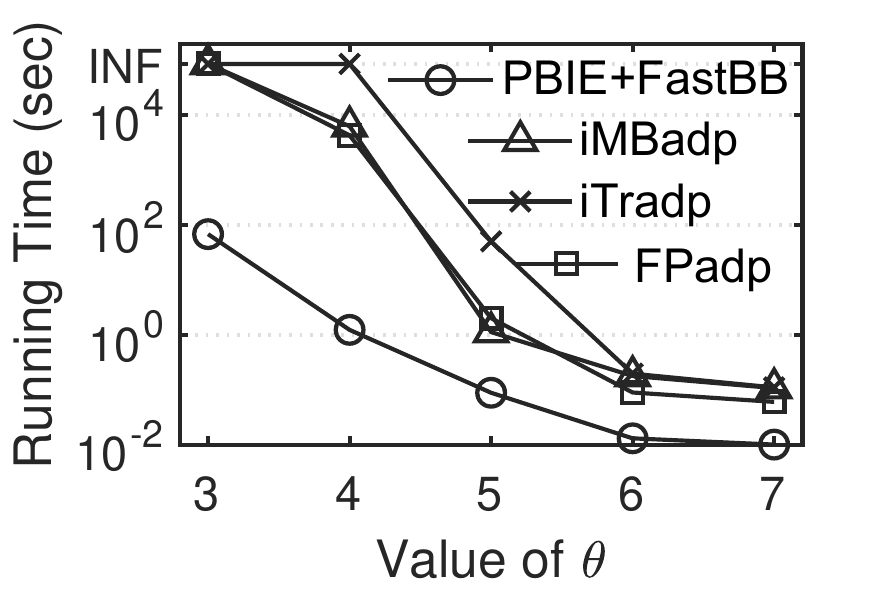}
		\end{minipage}		
		\\
		(a) Varying $\theta$ (Writer)
		&
		(b) Varying $\theta$ (Location)
		\\
		\begin{minipage}{3.80cm}
			\includegraphics[width=4.1cm]{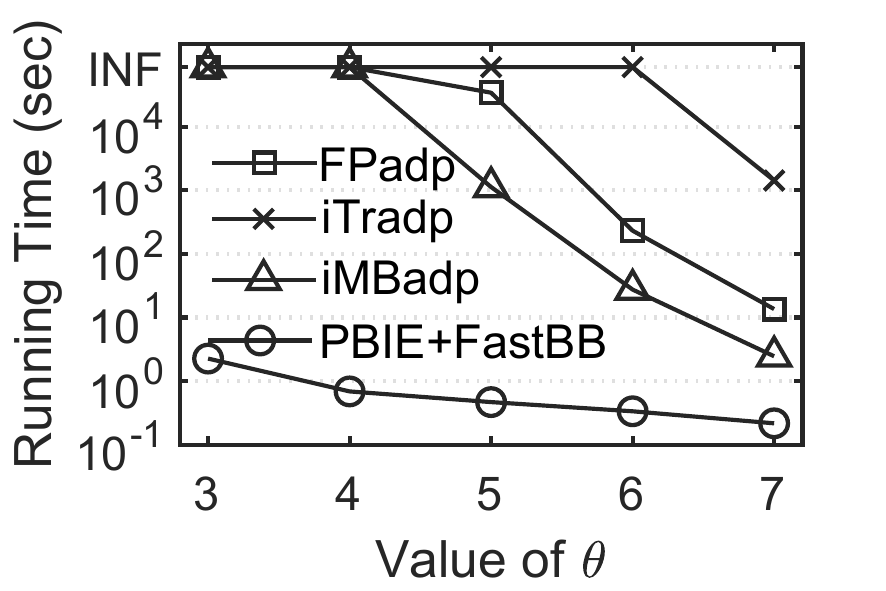}
		\end{minipage}
		&
		\begin{minipage}{3.80cm}
			\includegraphics[width=4.1cm]{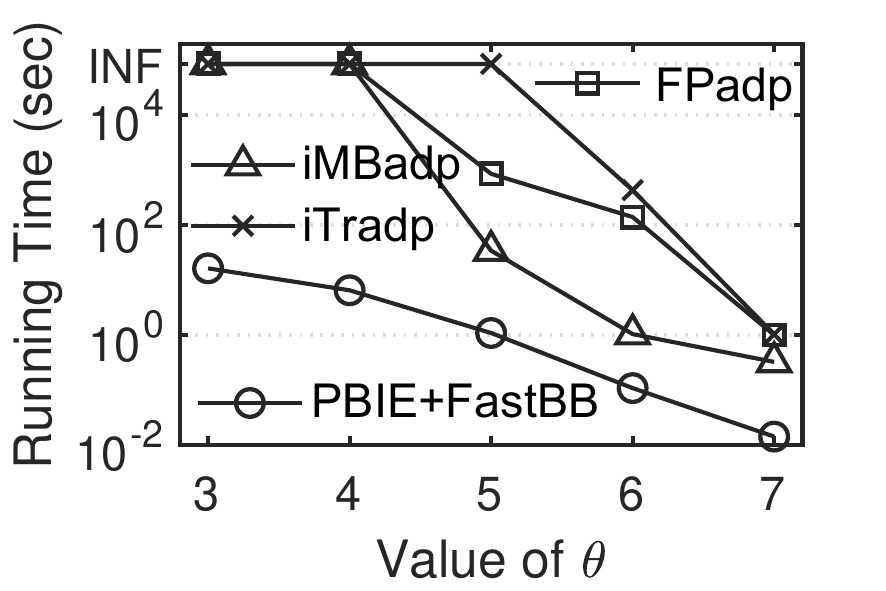}
		\end{minipage}		
		\\
		(c) Varying $\theta$ (DBLP)
		&
		(d) Varying $\theta$ (Google) 
	\end{tabular}
	\vspace{-0.15in}
	\caption{Comparison by varying $\theta=\theta_L=\theta_R$ ($k=1$)}
	\label{fig:vary_theta}
\end{figure}

\smallskip
\noindent\textbf{Varying \# of vertices (synthetic datasets).} The results are shown in Figure~\ref{fig:synthetic}(a). \texttt{PBIE+FastBB} outperforms other algorithms by achieving at least 10$\times$ speedup and can handle the largest datasets with 1 million vertices and 10 million edges within INF while others cannot. Besides, the speedup increases as the graph become larger. 
{\roundA This is mainly because $\texttt{PBIE}$ would prune more unfruitful vertices when locating the MaxBP in several smaller subgraphs. In addition, the results are well aligned with the theoretical results, i.e., the worst-case time complexity of \texttt{PBIE+FastBB} is exponential wrt $d^3$ while that of others is exponential wrt $|V|$. Hence, \texttt{PBIE+FastBB} has larger speed-ups as the graph scale becomes larger (where $d$ remains almost the same due to the fixed edge density, i.e., 20).}

\smallskip
\noindent\textbf{Varying edge density (synthetic datasets).} The results are shown in Figure~\ref{fig:synthetic}(b). \texttt{PBIE+FastBB} achieves at least 10$\times$ speedup compared with other algorithms. In addition, the speedup decreases as the graph becomes denser. {\roundA The reason is that the maximum degree of the bipartite graph, i.e., $d$, increases as the graph becomes denser. 
}

\begin{figure}[]
	\centering
	\vspace{-0.10in}
	\begin{tabular}{c c}
		
		\begin{minipage}{3.80cm}
			\includegraphics[width=4.1cm]{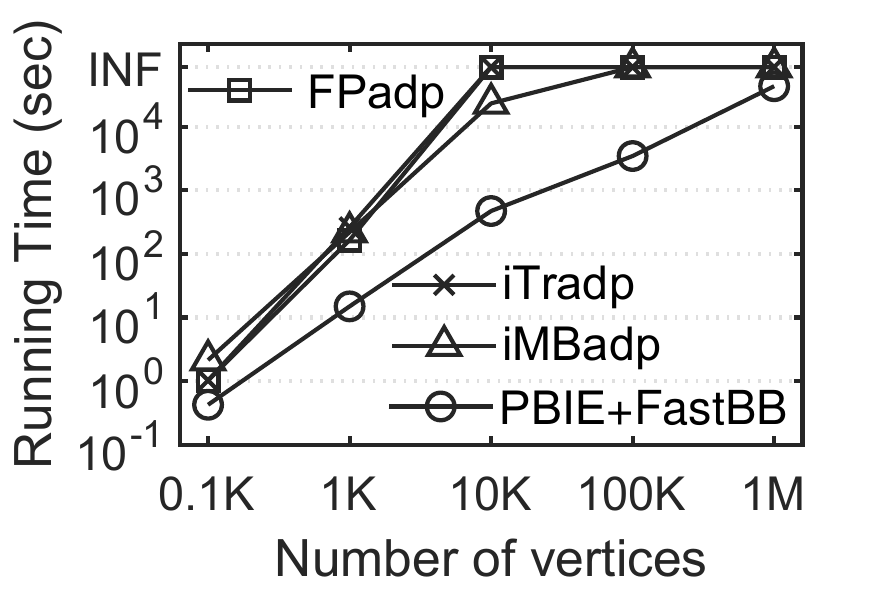}
		\end{minipage}
		&
		\begin{minipage}{3.80cm}
			\includegraphics[width=4.1cm]{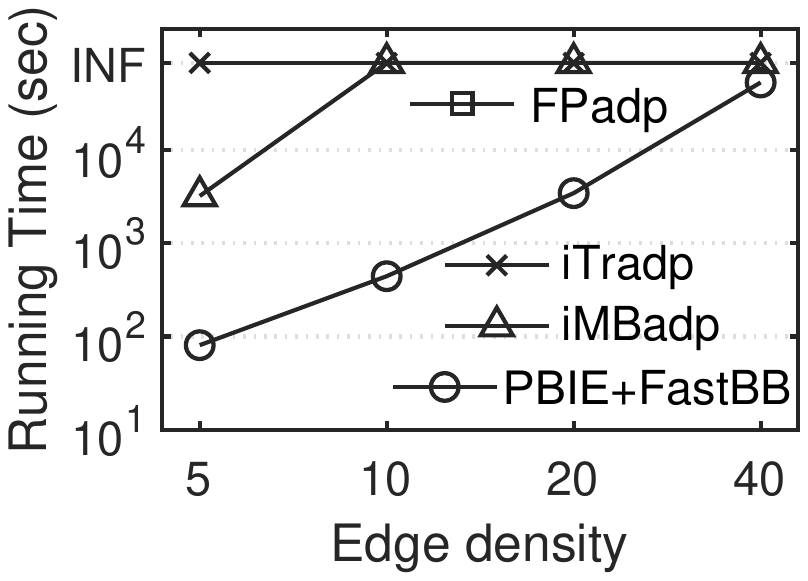}
		\end{minipage}		
		\\
		(a) Varying \# of vertices
		&
		(b) Varying edge density
	\end{tabular}
	\vspace{-0.15in}
	\caption{Comparison on synthetic datasets ($k=1$)}
	\label{fig:synthetic}
\end{figure}

\subsection{Performance study}

\noindent\textbf{Comparison among various enumeration schemes.} We study the effect of various enumeration schemes, namely \texttt{FastBB}, \texttt{BasicBB}, \texttt{FPadp}, \texttt{iMBadp}, and \texttt{iTradp}. We note that all of them are run with the framework \texttt{PBIE} for fair comparison. 
{\roundA In particular, \texttt{BasicBB} adopts the classic BK branching and uses the non-decreasing vertex ordering}.
The results are shown in Figure~\ref{fig:exp_alg}(a) and (b) for varying $k$, and (c) and (d) for varying $\theta=\theta_L=\theta_R$. 
First, \texttt{FastBB} outperforms other algorithms, which demonstrates the superiority of proposed Sym-BK branching scheme. {\roundA Besides, the achieved speedup decreases with $\theta$ since the search space (e.g., the number of large MBPs with the size of each side at least $\theta_L$ and $\theta_R$) decreases as $\theta_L$ and $\theta_R$ grow.}
Second, the algorithms following the BK branching perform better than \texttt{iTradp} that follows a reverse search method. The reason is that the later one cannot be enhanced with effective pruning rules as branch-and-bound algorithms.
%

\begin{figure}[]
	\centering
	\begin{tabular}{c c}
		
		\begin{minipage}{3.80cm}
			\includegraphics[width=4.1cm]{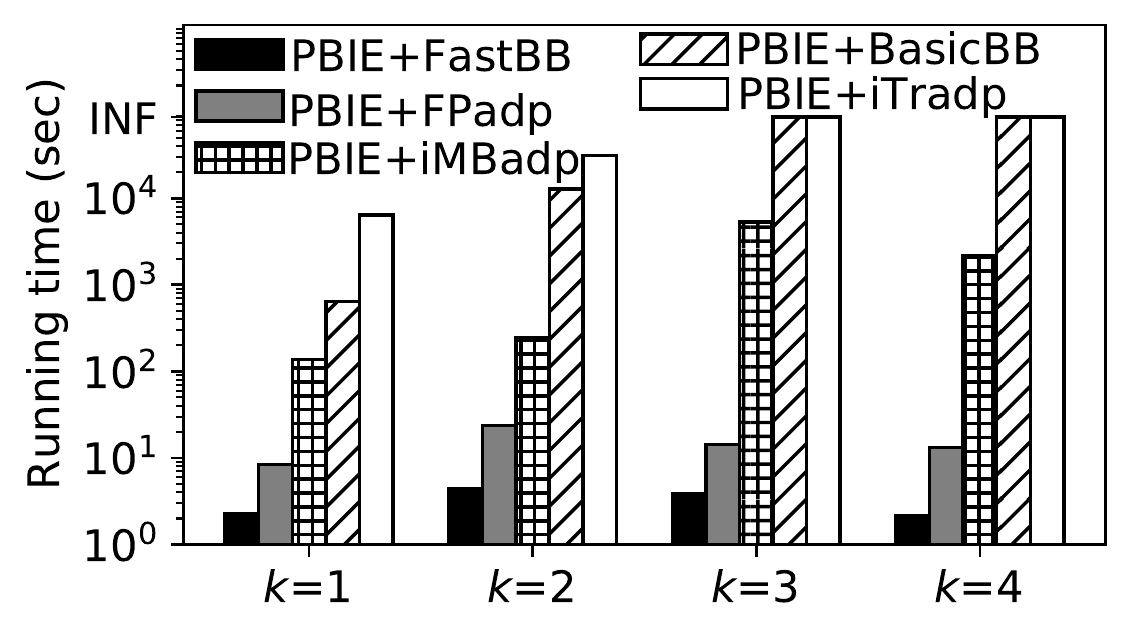}
		\end{minipage}
		&
		\begin{minipage}{3.80cm}
			\includegraphics[width=4.1cm]{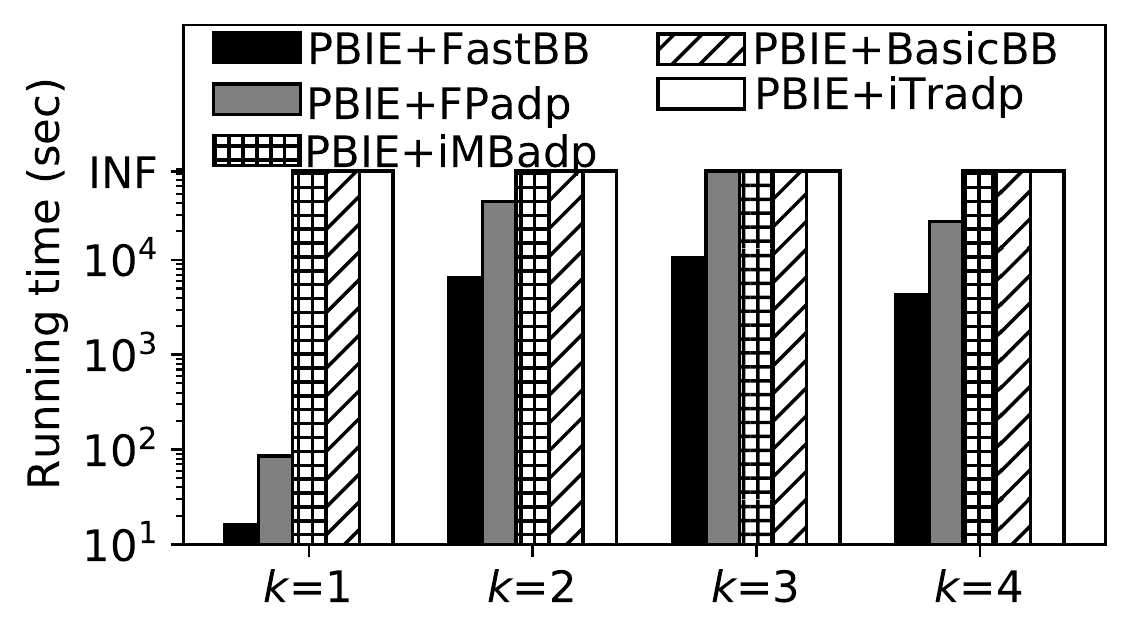}
		\end{minipage}		
		\\
		(a) Varying $k$ (DBLP)
		&
		(b) Varying $k$ (Google)
		\\
		\begin{minipage}{3.80cm}
			\includegraphics[width=4.1cm]{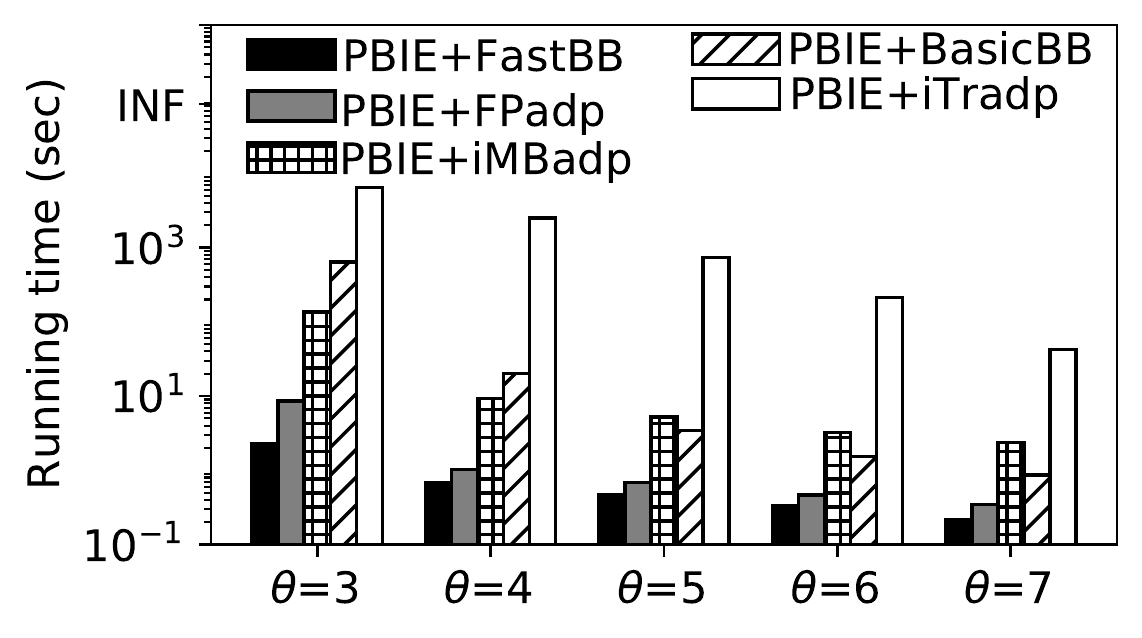}
		\end{minipage}
		&
		\begin{minipage}{3.80cm}
			\includegraphics[width=4.1cm]{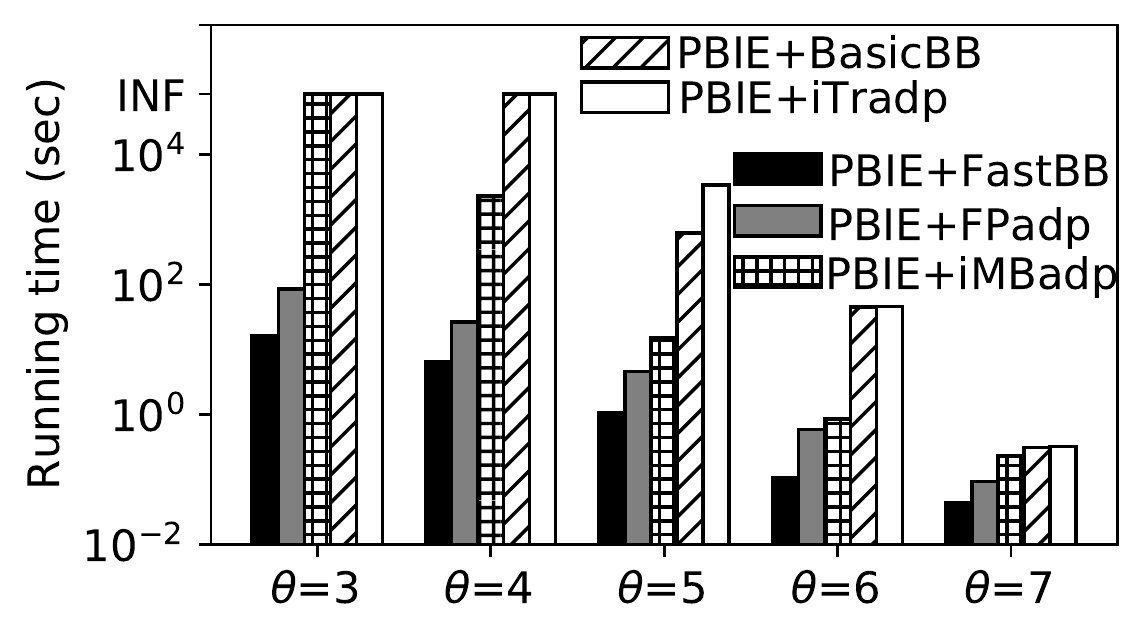}
		\end{minipage}		
		\\
		\ \ \ (c) Varying $\theta$ (DBLP, $k$=1)
		&
		\ \ \ (d) Varying $\theta$ (Google, $k$=1)
	\end{tabular}
	\vspace{-0.15in}
	\caption{Comparison among various enumeration schemes}
	\label{fig:exp_alg}
	\vspace{-0.15in}
\end{figure}

\smallskip
\noindent\textbf{Comparison among frameworks.} We study the effect of different frameworks and compare four different versions, namely (1) \texttt{FastBB}: without any framework, (2) \texttt{IE+FastBB}: with adapted framework \texttt{IE}, (3) \texttt{PB+FastBB}: with adapted progressive bounding framework \texttt{PB}, and (4) \texttt{PBIE+FastBB}: with proposed combined framework. We note that all frameworks adopt \texttt{FastBB} for fair comparison. The results are shown in Figure~\ref{fig:exp_framework}(a) and (b) for varying $k$, and (c) and (d) for varying $\theta=\theta_L=\theta_R$. 
First, both \texttt{IE+FastBB} and \texttt{PB+FastBB} outperform \texttt{FastBB}, which demonstrates the efficiency and scalability of adapted frameworks. Moreover, \texttt{PBIE+FastBB} performs the best and can handle all datasets and settings within INF. {This is because \texttt{PBIE} can significantly reduce the size of the original graph (details can be found
\ifx \CR\undefined
{\LC in the appendix}
\else
{\LC in the technical report~\cite{TR}}
\fi
).
}
Second, \texttt{PB+FastBB} performs better than \texttt{IE+FastBB}. This is because \texttt{PB} can quickly locate the MaxBP at a much smaller subgraph.
Third, the speedup decreases as $\theta$ grows. As discussed earlier, the search space gets smaller with $\theta$ and thus the frameworks would have less effects on the running time.

\begin{figure}[]
	\centering
	\vspace{-0.15in}
	\begin{tabular}{c c}
		
		\begin{minipage}{3.80cm}
			\includegraphics[width=4.1cm]{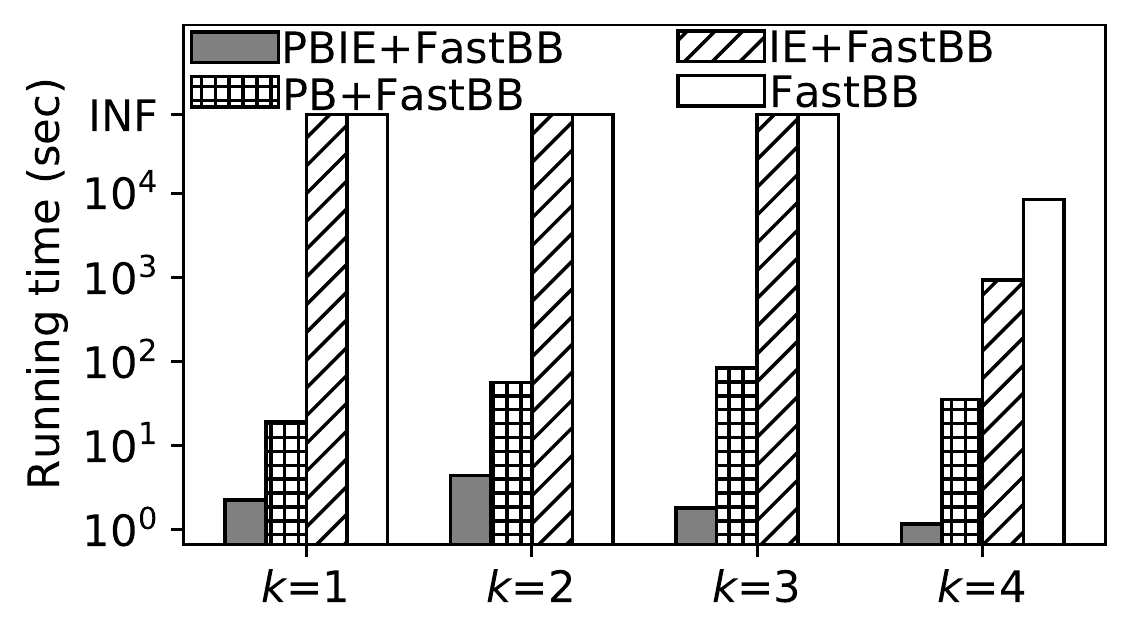}
		\end{minipage}
		&
		\begin{minipage}{3.80cm}
			\includegraphics[width=4.1cm]{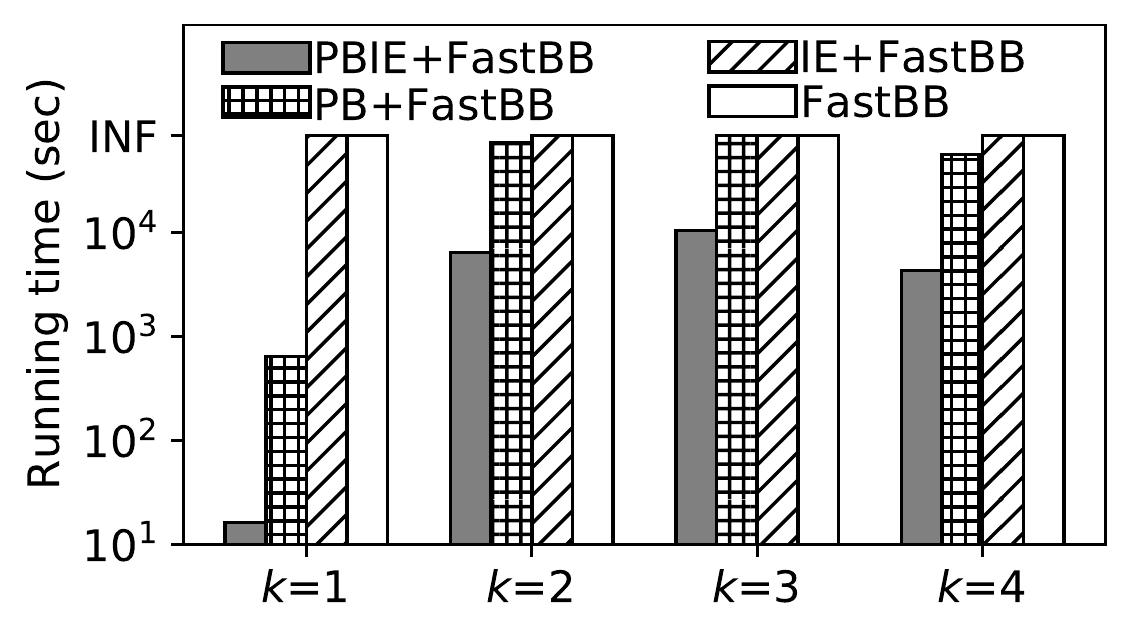}
		\end{minipage}		
		\\
		(a) Varying $k$ (DBLP)
		&
		(b) Varying $k$ (Google)
		\\
		\begin{minipage}{3.80cm}
			\includegraphics[width=4.1cm]{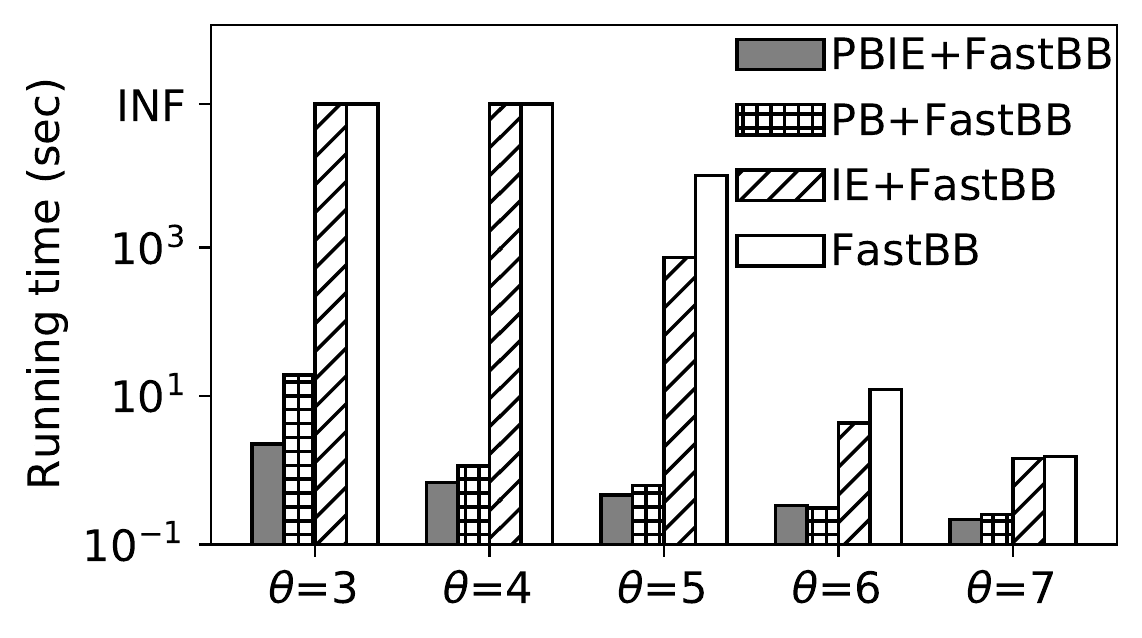}
		\end{minipage}
		&
		\begin{minipage}{3.80cm}
			\includegraphics[width=4.1cm]{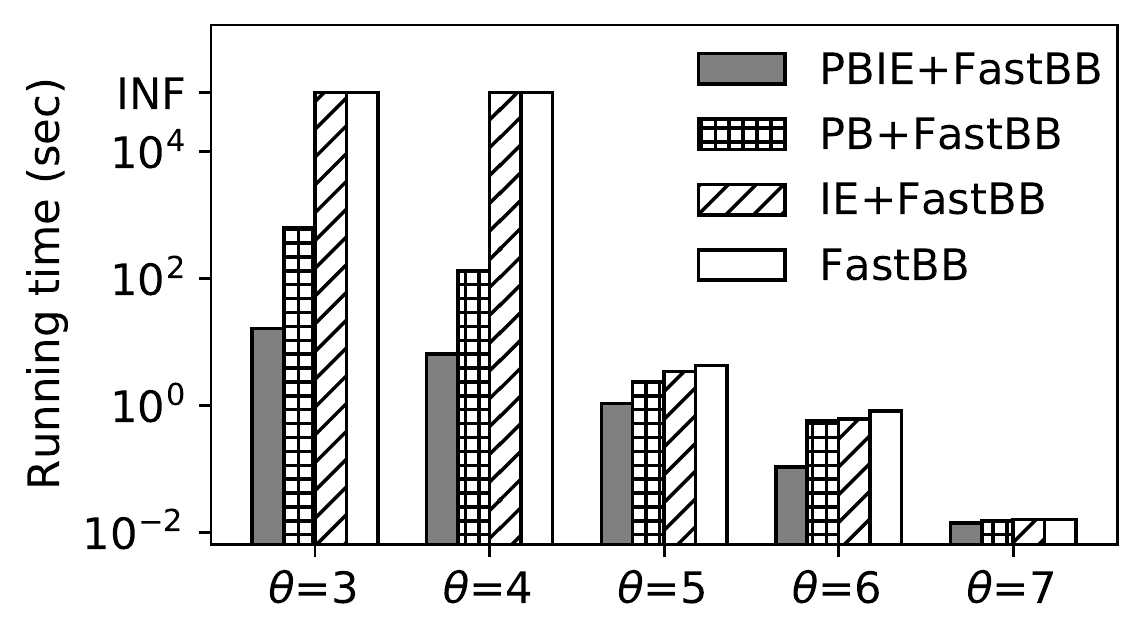}
		\end{minipage}		
		\\
		\ \ \ (c) Varying $\theta$ (DBLP, $k$=1)
		&
		\ \ \ (d) Varying $\theta$ (Google, $k$=1)
	\end{tabular}
	\vspace{-0.15in}
	\caption{Comparison among frameworks}
	\label{fig:exp_framework}
	\vspace{-0.10in}
\end{figure}
\subsection{Case study: Fraud Detection}
We investigate two cohesive models, namely maximal $k$-biplex and maximum $k$-biplex, for a fraud detection task~\cite{DBLP:conf/kdd/HooiSBSSF16} on the Software dataset ~\cite{dataset}.
The dataset contains 459,436 reviews on 21,663 softwares by 375,147 users. We consider the random camouflage attack scenario~\cite{DBLP:conf/kdd/HooiSBSSF16} where a fraud block with 1K fake users (``K'' means a thousand), 1K fake softwares, 50K fake comments, and 50K camouflage comments, is injected to the dataset. Specifically, we randomly generate the fake comments (resp. camouflage comments) between pairs of fake users and fake products (resp. real products). We note that this attack can be easily conducted in reality to help fake users evade the detection, e.g., fake users are coordinated to deliberately post comments on some real products~\cite{DBLP:conf/kdd/HooiSBSSF16}.  We then find MaxBPs and MBPs from the bipartite graph, and classify all users and products involved in the found subgraphs as fake items and others as real ones. 

We measure the running time and F1 score, and show the results in Figure~\ref{fig:case_study} where $\theta_L$ is fixed at 4. 
\begin{itemize}[leftmargin=*]
    \item 
    
    \textbf{Varying $\theta_R$}. 
    We find the 2000 MaxBPs (denoted by ``MaxBP (2000)''), first-2000 MBPs (denoted by ``MBP (2000)''), i.e., the first 2000 MBPs yielded by the algorithm, and all MBPs (denoted by ``MBP (All)'') with $\theta_R$ varying from 3 to 6, and show the results in Figure~\ref{fig:case_study}(a).
    We have the following observations. (1) MaxBP (2000) achieves the best F1 score (0.99) when $\theta_R = 5$ among all methods. (2) Both MBP (2000) and MBP (All) achieve their best F1 scores, 0.80 and 0.87, respectively, when $\theta_R = 5$. (3) Under the setting with the best F1 scores, i.e., $\theta_R = 5$, MaxBP (2000) and MBP (2000) run comparably fast and both of them run faster than MBP (All).
    For (1) and (2), the reason could be that the fraud blocks tend to reside in large MBPs with more edges, and for (3), the reason could be that MaxBP (2000) and MBP (2000) need to explore a set of \emph{some} but not all MBPs.
    
    \item \textbf{Varying $K$}. 
    We find the $K$ MaxBPs (denoted by ``MaxBP (K)'') and the first-$K$ MBPs (denoted by ``MBP (K)'') with $\theta_R=5$ since it give the best F1 scores, and show the results in Figure~\ref{fig:case_study}(b). We have the following observations.
    (1) MaxBP (K) has the F1 score higher than MBP (K) on all settings, and achieves the best F1 score at $K=2000$. 
    (2) MaxBP (K) provides a better trade-off between F1 score and running time than MBP (K), e.g., MaxBP (K) provides a F1 score 0.94 with the running time 512 seconds at $K = 1000$ while MBP (K) provides a similar F1 score 0.91 with the running time 1512 seconds at $K = 16000$.
\end{itemize}
In summary, MaxBP outperforms MBP for fraud detection in terms of F1 score and the running time, as shown in Figure~\ref{fig:case_study}.

{\roundB We have also explored methods of enumerating maximal bicliques and maximum bicliques for this case study. We 
found that the best F1-score achieved by enumerating maximal bicliques (with $\theta_R=4$) and that of finding the maximum bicliques (with $\theta_R=4$) are 0.48 and 0.54, respectively. In particular, when $\theta_R\geq 5$, there are few large bicliques, and thus both methods have the recall close to 0. This is mainly because biclique requires full connections among vertices, which is usually too strict.}

\begin{figure}[]
	\centering
	\vspace{-0.15in}
	\begin{tabular}{c c}
		
		\begin{minipage}{3.80cm}
			\includegraphics[width=4.2cm]{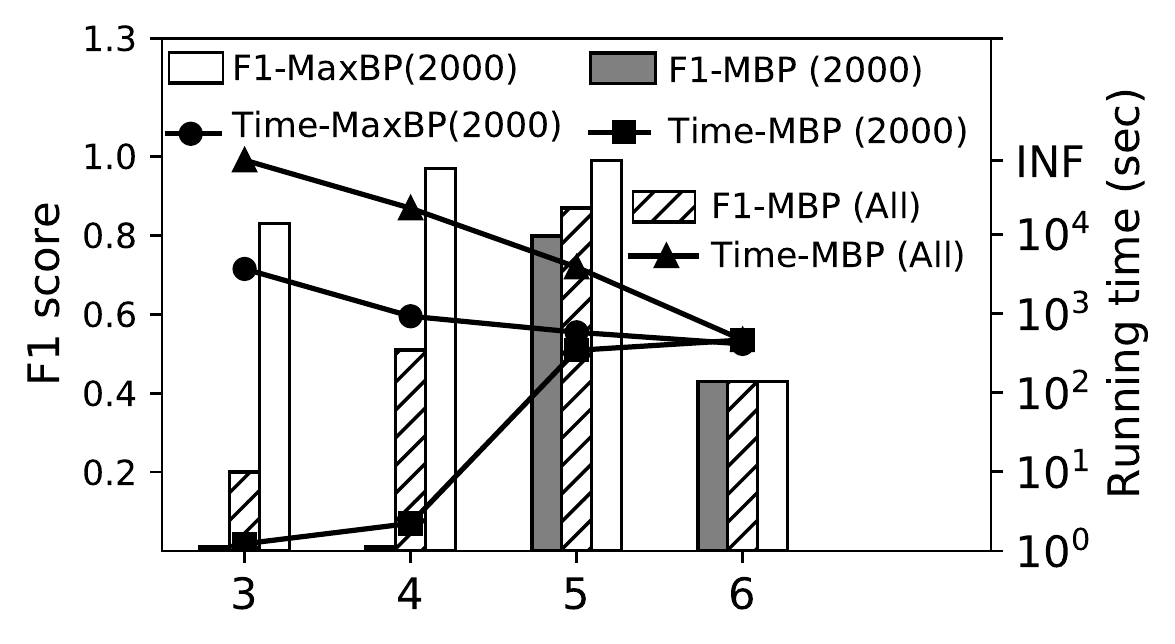}
		\end{minipage}
		&
		\begin{minipage}{3.80cm}
			\includegraphics[width=4.2cm]{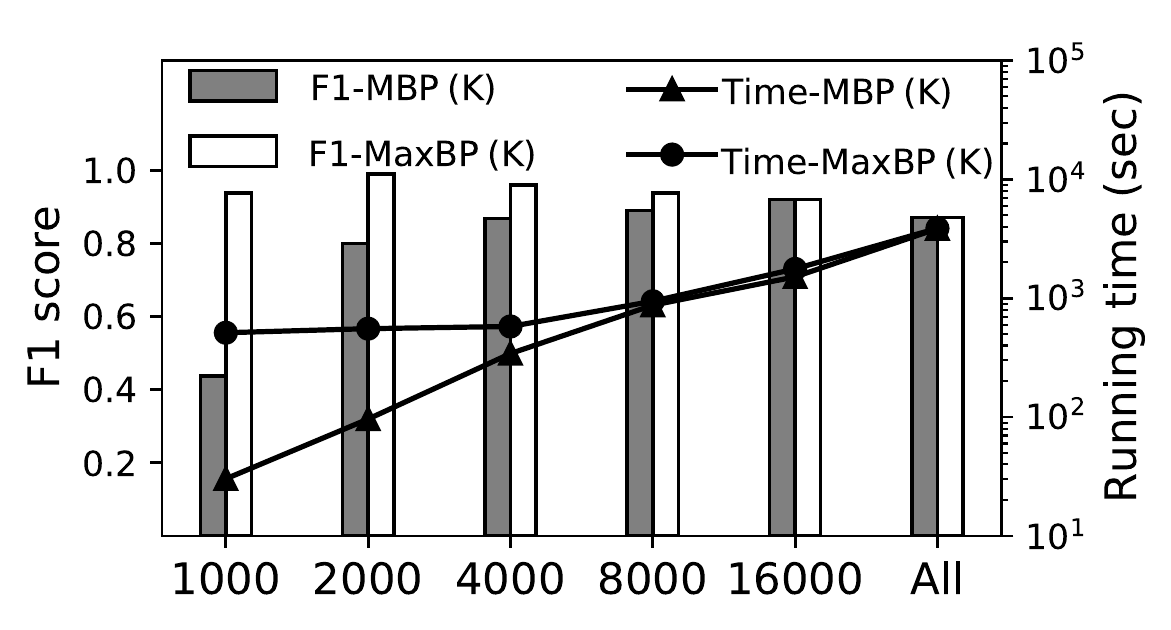}
		\end{minipage}		
		\\
		(a) Varying $\theta_R$
		&
		(b) Varying $K$ ($\theta_R=5$)
	\end{tabular}
 	\vspace{-0.15in}
	\caption{\roundA Case Study: Fraud Detection}
	\label{fig:case_study}
\end{figure}

\section{Related Work}
\label{sec:related}


{\LC We find that the maximum $k$-biplex search problem was recently studied in~\cite{luo2022maximum}. 
Some summary of the similarities and differences between \cite{luo2022maximum} and our paper include: (1) \cite{luo2022maximum} and our paper propose the maximum $k$-biplex search problem concurrently and independently; (2) both \cite{luo2022maximum} and our paper prove the NP-hardness of the problem, yet they use different NP complete problems; (3) the solution in \cite{luo2022maximum} has the time complexity of $O^*(2^{|V|})$ while 
our \texttt{FastBB} algorithm has the time complexity of $O^*(\gamma_k^{|V|})$ with $\gamma_k < 2$; 
and (4) the solutions in this paper run significantly faster than that in \cite{luo2022maximum}.
Some more details of the comparison can be found 
\ifx \CR\undefined
{\LC in the appendix.}
\else
{\LC in the technical report~\cite{TR}.}
\fi
} 
We review other related work as follows.

\smallskip
\noindent\textbf{Cohesive bipartite structures.}
Recently, many studies have been conducted on finding cohesive subgraphs of bipartite graphs, including bicliques~\cite{zhang2014finding,DBLP:conf/ijcai/AbidiZCL20,mukherjee2016enumerating,lyu2020maximum,chen2021efficient,kloster2019mining, Yangpqcounting}, $(\alpha,\beta)$-cores~\cite{liu2020efficient}, quasi-bicliques~\cite{DBLP:conf/cocoon/LiuLW08,wang2013near,ignatov2018mixed,DBLP:conf/icfca/Ignatov19}, $k$-biplexes~\cite{DBLP:journals/sadm/SimLGL09,yu2021efficient,yu2022kbiplex}, $k$-bitrusses~\cite{DBLP:conf/dasfaa/Zou16,wang2020efficient}, etc.
Biclique is a bipartite graph where any vertex at one side connects all vertices at the other side. Recent works on biclique focus on enumerating maximal bicliques~\cite{zhang2014finding,DBLP:conf/ijcai/AbidiZCL20,kloster2019mining}.
%
Given a bipartite graph, $(\alpha,\beta)$-core is the maximal subgraph where any vertex at one side connects a certain number of vertices (i.e., $\alpha$ or $\beta$) at the other side. It has many applications, including recommendation systems~\cite{ding2017efficient} and community search~\cite{DBLP:conf/ssdbm/HaoZW020,DBLP:journals/corr/abs-2011-08399}.
Existing works of $k$-biplexes focus on enumerating large maximal $k$-biplex~\cite{DBLP:journals/sadm/SimLGL09,yu2021efficient}.
A $k$-bitruss~\cite{DBLP:conf/dasfaa/Zou16,wang2020efficient} is a bipartite graph where each edge is contained in at least $k$ butterflies,
where a butterfly corresponds to a complete 2$\times$2 biclique~\cite{DBLP:journals/pvldb/WangLQZZ19}.
In the literature, there are two types of quasi-biclique, i.e., (1) $\delta$-quasi-biclique~\cite{DBLP:conf/cocoon/LiuLW08} is a bipartite graph $G(L,R,E)$ where each vertex in $L$ (resp. $R$) misses at most $\delta \cdot |R|$ (resp. $\delta \cdot |L|$) edges with $\delta\in [0,1)$) and {\roundB (2) $\gamma$-quasi-biclique~\cite{ignatov2018mixed,yan2005identifying}} is a bipartite graph $G(L,R,E)$ that can miss at most $\gamma \cdot |L|\cdot|R|$ edges with $\gamma\!\in\! [0,1)$.
Existing works of quasi-bicliques focus on finding subgraphs with a certain density and degree~\cite{DBLP:conf/kdd/MitzenmacherPPT15,DBLP:conf/bigdata/LiuSC13}.
%
In this paper, we focus on $k$-biplex since it (1) imposes strict enough requirements on connections within a subgraph and tolerates some disconnections and (2) satisfies the hereditary property, which facilitates efficient solutions. 
In~\cite{yu2022kbiplex}, a case study of fraud detection on e-commerce platforms is conducted, which shows that $k$-biplex works better than some other cohesive subgraph structures including biclique, $(\alpha,\beta)$-core, and $\delta$-quasi-biclique for the application.

\smallskip
\noindent\textbf{Maximum biclique search.} 
The maximum biclique search problem has attracted much attention in recent years~\cite{dawande2001bipartite,sozdinler2018finding,lyu2020maximum,shaham2016finding,garey1979computers,DBLP:conf/cpaior/McCreeshP14,zhou2018towards,chen2021efficient,wang2018new,DBLP:journals/tcyb/YuanLCY15,DBLP:journals/corr/ZhouH17a}. In general, there are three lines of works, namely maximum edge biclique search (MEBS)~\cite{dawande2001bipartite,sozdinler2018finding,lyu2020maximum,shaham2016finding} which finds a biclique $H^*$ such that $E(H^*)$ is maximized, maximum vertex biclique search (MVBS)~\cite{garey1979computers} which finds a biclique $H^*$ such that $V(H^*)$ is maximized and maximum balanced biclique search (MBBS)~\cite{DBLP:conf/cpaior/McCreeshP14,zhou2018towards,chen2021efficient,wang2018new,DBLP:journals/tcyb/YuanLCY15,DBLP:journals/corr/ZhouH17a} which finds a biclique $H^*$ such that $V(H^*)$ is maximized and $L(H^*)=R(H^*)$. 
First, the MEBS problem is NP-hard, for which many techniques have been proposed. Authors in \cite{dawande2001bipartite,sozdinler2018finding} adopt the integer linear programming techniques to find a MEB, which is not scalable for large bipartite graphs. A recent study~\cite{lyu2020maximum} proposes a progressive bounding framework to deal with large bipartite graphs.  Besides, a Monte Carlo algorithm is proposed in \cite{shaham2016finding}, which finds a MEB with a fixed probability. 
Second, the MVBS problem can be solved in polynomial time by finding a maximum matching~\cite{garey1979computers}.
Third, the MBBS problem is NP-hard, for which both exact methods and approximate methods have been developed. To be specific, exact methods proposed in \cite{DBLP:conf/cpaior/McCreeshP14,zhou2018towards,chen2021efficient} {\roundA are branch-and-bound algorithms which use the widely-used Bron-Kerbosch (BK) branching~\cite{bron1973algorithm}.}
Besides, approximate methods include \cite{wang2018new,DBLP:journals/tcyb/YuanLCY15} which introduce a local search framework to find an approximate MBB, \cite{DBLP:journals/tcas/Al-YamaniRP07,DBLP:journals/jetc/Tahoori06} which {\LC convert MBBS into a maximum independent set problem and adopt approximate algorithms for the independent set problem}, and \cite{DBLP:journals/corr/ZhouH17a} which proposes a heuristic algorithm with tabu search and graph reduction. 
{\roundA In summary, there are two types of solutions, namely exact methods and approximate methods. For exact algorithms, most of them follow the branch-and-bound framework and use the BK branching strategy. However, based on our experimental results, BK branching strategy performs worse than our proposed Sym-BK branching strategy for the problem of finding MaxBP.
For approximate methods, they cannot be adapted to find the MaxBP exactly.}

{\roundA
\smallskip
\noindent\textbf{Maximum quasi-biclique search.} The maximum quasi-biclique search problem aims to find a $\gamma$-quasi-biclique or $\delta$-quasi-biclique $H^*$ {\LC such that} $V(H^*)$ is maximized, which is a NP-hard problem~\cite{DBLP:conf/cocoon/LiuLW08,DBLP:conf/icfca/Ignatov19}. 
\cite{DBLP:conf/icfca/Ignatov19,ignatov2018mixed} use mixed integer programming to find a maximum $\gamma$-quasi-biclique exactly, which cannot handle large datasets. 
\cite{wang2013near,DBLP:conf/cocoon/LiuLW08} propose greedy algorithms to find an approximate maximum $\delta$-quasi-biclique, which cannot be adapted to find the MaxBP exactly. 
}
\balance
\section{Conclusion}
\label{sec:conclusion}
In this paper, we study the \emph{maximum $k$-biplex search} problem, which is to find $K$ maximal $k$-biplexes with the most edges. 
We propose two branch-and-bound algorithms, among which the better one \texttt{FastBB} is based on a novel Sym-BK branching strategy and achieves better worst-case time complexity than adaptions of existing algorithms. We further develop frameworks to boost the efficiency and scalability of the branch-and-bound algorithms including \texttt{FastBB}.
Extensive experiments are conducted on real and synthetic datasets to demonstrate the efficiency of our algorithms and the effectiveness of proposed techniques. 
In the future, we plan to develop efficient parallel algorithms for the maximum $k$-biplex search problem and explore the possibility of adapting our algorithms to find other types of maximum cohesive subgraphs in bipartite graphs.

\section{Proof of Lemma~\ref{lemma:np}}
\label{sec:proof}
We prove by showing a polynomial reduction from a well-known NP-complete problem, namely \emph{maximum clique search}, to the maximum $k$-biplex search problem {\LC with $\theta_L=\theta_R=0$, $k = 1$, and $K = 1$}. 
We define their decision problems as follows. 
\begin{itemize}[leftmargin=*]
    \item CLIQUE: given a general graph $G=(V,E)$ and a positive integer $\alpha$, does $G$ contain a clique with at least $\alpha$ vertices?
    \item BIPLEX: given a bipartite graph $\mathcal{G}=(\mathcal{L}\cup \mathcal{R},\mathcal{E})$ and two positive integers $k$ and $\alpha'$, does $\mathcal{G}$ contain a $k$-biplex with at least $\alpha'$ edges?
\end{itemize}

Let $G=(V,E)$ and $\alpha$ be the inputs of an instance of {CLIQUE}. W.l.o.g., we assume that $\alpha=\frac{1}{2} |V|$ is a positive integer. This can be achieved with some inflation tricks. Specifically, (1) if the original input $\alpha$ is smaller than $\frac{1}{2}|V|$ (which can be a fractional number), we add to $G$ a new vertex $v$ and $|V|$ edges between $v$ and other vertices. Obviously, every clique would include $v$, increasing $\alpha$ by 1 but $\frac{1}{2}|V|$ by 0.5. (2) If $\alpha>\frac{1}{2}|V|$, we add to $G$ a new vertex $v$ with no edges. Clearly, every clique would not include $v$, increasing $\frac{1}{2}|V|$ by 0.5 only. By repeating above two steps, we can make $\alpha=\frac{1}{2} |V|$ finally.

Now, we construct an instance of BIPLEX (with $k=1$) with $\mathcal{G}=(\mathcal{L}\cup\mathcal{R},\mathcal{E})$ and $\alpha'$. To be specific, 
\begin{equation}
    \mathcal{L}=V \text{ and }  \mathcal{R}=E\cup W\cup U,\  |W|= \frac{1}{2}\alpha(\alpha-5),\  |U|=2\alpha \nonumber
\end{equation}
where $W\cup U$ is a set of new elements and $|V|=2\alpha$. Assume $V=\{v_1,...,v_{2\alpha}\}$ and $U=\{u_1,...,u_{2\alpha}\}$. We have
\begin{gather}
    \mathcal{E}=\mathcal{E}(\mathcal{G}[V\cup E])\cup \mathcal{E}(\mathcal{G}[V\cup W])\cup \mathcal{E}(\mathcal{G}[V\cup U])\nonumber\\
    =\{(v,e)\mid v\in V, e\in E, v\notin e\}\cup \{(v,w)\mid v\in V, w\in W\} \nonumber\\
    \cup \{(v_i,u_j)\mid v_i\in V, u_j\in U, i\neq j\},\nonumber\\
    \alpha'=\frac{1}{2}\alpha^3+\frac{3}{2}\alpha^2-\alpha\nonumber.
\end{gather}
To guarantee $|W|\geq 0$, we assume $\alpha\geq 5$ {\LC which can be achieved} with the inflation tricks. The above construction can be finished in polynomial time.


We then show that \emph{$G$ has a clique with at least $\alpha$ vertices iff $\mathcal{G}$ has a 1-biplex with at least $\alpha'$ edges}. 
We consider the following two cases.

\noindent\textbf{Case 1:} $G$ has a clique $G[C]$ with $\alpha$ vertices, i.e., $C\subseteq V$ and $|C|=\alpha$. Consider $X=V\backslash C$ and $Y=E(G[C])\cup W\cup U$. We prove this case by showing that $\mathcal{G}[X\cup Y]$ is a 1-biplex with $\alpha'$ edges. To be specific, for each vertex $v \in X$, we know: (1) vertex $v$ connects all vertices from $E(G[C])$ since $v$ is not an endpoint of any edge in $E(G[C])$, thereby yielding $|X|\times |E(G[C])|$ edges in total; (2) vertex $v$ connects all vertices from $W$ based on the construction, yielding $|X|\times |W|$ edges in total; and (3) vertex $v$ disconnects exactly one vertex in $U$ based on the construction, yielding $|X|\times (|U|-1)$ edges in total. For each vertex $u\in Y$, we can similarly verify that vertex $u$ disconnects no more than one vertex from $X$. In summary, we have a 1-biplex $\mathcal{G}[X\cup Y]$ with
\begin{equation}
    |\mathcal{E}(\mathcal{G}[X\cup Y])|=|X|\times |E(G[C])|+|X|\times |W|+|X|\times (|U|-1),\nonumber
\end{equation}
where $|X|=\alpha$, $|E(G[C])|=\frac{1}{2}\alpha(\alpha-1)$, $|W|=\frac{1}{2}\alpha(\alpha-5)$ and $|U|=2\alpha$. Therefore, $|\mathcal{E}(\mathcal{G}[X\cup Y])|
$ is exactly $\alpha'$.

\smallskip
\noindent\textbf{Case 2:} $G$ does not have a clique with at least $\alpha$ vertices. If no 1-biplex is found in $\mathcal{G}$, the proof is finished. Otherwise, let $\mathcal{G}[X\cup Y]$ be an arbitrary 1-biplex in $\mathcal{G}$ such that $X\subseteq \mathcal{L}$ and $Y\subseteq \mathcal{R}$. We finish the proof by showing that $|\mathcal{E}(\mathcal{G}[X\cup Y])|< \alpha'$. To estimate $|\mathcal{E}(\mathcal{G}[X\cup Y])|$, we divide $Y$ into two disjoint parts $Y_0\cup Y_1$, i.e., vertices in $Y_0$ connect all vertices from $X$ and vertices in $Y_1$ disconnect exactly one vertex from $X$. For $Y_0$, we have: (1) it includes all vertices from $E(G[V\backslash X])\subseteq E$ since every edge in $E(G[V\backslash X])$ has no endpoint in $X$; (2) it includes all vertices from $W$ based on our construction; and (3) it includes $|V\backslash X|$ vertices from $U$ (with $X=\{v_1,...,v_{|X|}\}$, vertices in $\{u_{|X|+1},...,u_{2\alpha}\}$ would connect all vertices from $X$ based on our construction). For $Y_1$, it includes at most $|X|$ vertices since otherwise there exists at least a vertex in $X$ that disconnects at least 2 vertices based on the pigeonhole principle. This contradicts to the definition of 1-biplex $\mathcal{G}[X\cup Y]$. In summary, we have
\begin{eqnarray}
\label{eq:edge_estimator}
    |\mathcal{E}(\mathcal{G}[X\cup Y])|= |X|\times |Y_0|+(|X|-1)\times |Y_1|\ \ \ \ \ \ \nonumber\\ 
    \leq|X|\!\times\! (|E(G[V\backslash X])|\!+\!|W|\!+\!|V\backslash X|)+(|X|\!-\!1)\!\times\! |X|
\end{eqnarray}
Let $x=|X|$. We have $0\leq x\leq 2\alpha$ and consider two cases.

\noindent \emph{\textbf{Case 2.1:}} $\alpha<x\leq 2\alpha$. Let $y= x-\alpha$ ($0<y\leq \alpha$). We have $|X|=\alpha+y$ and $|V\backslash X|=\alpha-y$. Moreover, we can obtain $|E(G[V\backslash X])|\leq \frac{1}{2} |V\backslash X|(|V\backslash X|-1)=\frac{1}{2}(\alpha-y)(\alpha-y-1)$ where the equality holds iff $G[V\backslash X]$ is a clique. According to equation (\ref{eq:edge_estimator}), we have $|\mathcal{E}(\mathcal{G}[X\cup Y])|\leq (\alpha+y)[\frac{1}{2}(\alpha-y)(\alpha-y-1)+\frac{1}{2}\alpha^2-\frac{5}{2}\alpha+2\alpha-(\alpha+y)]+(\alpha+y-1)(\alpha+y)$ which reduces to
\begin{equation}
    |\mathcal{E}(\mathcal{G}[X\cup Y])|-\alpha'\leq \frac{1}{2} y[y^2-(\alpha-1)y-\alpha-2].\nonumber
\end{equation}
It is easy to verify that $y^2-(\alpha-1)y-\alpha-2$ is negative for $0\leq y\leq \alpha$. Therefore, we have $|\mathcal{E}(\mathcal{G}[X\cup Y])|<\alpha'$.

\noindent\emph{\textbf{Case 2.2:}} $0\leq x\leq \alpha$. Let $y=\alpha-x$ ($0\leq y\leq \alpha$). We have $|X|=\alpha-y$ and $|V\backslash X|=\alpha+y$. Since $|V\backslash X|=2\alpha-x \geq \alpha$ and $G$ does not have a clique with at least $\alpha$ vertices, it is easy to verify that $|E(G[V\backslash X])|< \frac{1}{2}|V\backslash X|(|V\backslash X|-1)-y$ since otherwise there exists a clique with at least $\alpha$ vertices in $G$ (note that the right term can be regarded as a process of iteratively removing $y$ edges from a clique with $|V\backslash X|$ vertices and after removing an edge, the maximum clique in the remaining graph has its size decrease by at most 1, which yields a clique with $|V\backslash X|-y=\alpha+y-y=\alpha$ vertices). 
According to equation (\ref{eq:edge_estimator}), we have $|\mathcal{E}(\mathcal{G}[X\cup Y])|< (\alpha-y)[\frac{1}{2}(\alpha+y)(\alpha+y-1)-y+\frac{1}{2}\alpha^2-\frac{5}{2}\alpha+2\alpha-(\alpha-y)]+(\alpha-y-1)(\alpha-y)$ which reduces to
\begin{equation}
    |\mathcal{E}(\mathcal{G}[X\cup Y])|-\alpha'< \frac{1}{2} y[-y^2-(\alpha-3)y-\alpha+2].\nonumber
\end{equation}
It is easy to verify that $y[-y^2-(\alpha-3)y-\alpha+2]\leq 0$ for $0\leq y\leq \alpha$ and $\alpha\geq 5$. We thus have $|\mathcal{E}(\mathcal{G}[X\cup Y])|<\alpha'$.
\section{Acknowledgement}
\label{sec:ack}
The research is supported by the Ministry of Education, Singapore, under its Academic Research Fund (Tier 2 Award MOE-T2EP20221-0013 and Tier 1 Award (RG77/21)). Any opinions, findings and conclusions or recommendations expressed in this material are those of the author(s) and do not reflect the views of the Ministry of Education, Singapore.
\balance
\clearpage
\bibliographystyle{ACM-Reference-Format}
\bibliography{SIGMOD_MaxBP}

\ifx \CR\undefined
 \clearpage
\input{appendix}
\balance
\fi

\end{document}